\newtheorem*{rep@theorem}{\rep@title}
\newcommand{\newreptheorem}[2]{%
	\newenvironment{rep#1}[1]{%
		\def\rep@title{#2 \ref{##1}}%
		\begin{rep@theorem}}%
		{\end{rep@theorem}}}
\newcommand{\altketbra}[1]{\ketbra{#1}{#1}}
\newcommand{\FQ}{\mathbf{\mathsf{\mathbbm{F}_Q}}}
\newcommand{\kb}[1]{\altketbra{#1}}
\newcommand{\Comm}{\emph{Comm}}
\newcommand{\Open}{\emph{Open}}
\newcommand{\Str}{\emph{Str}} 
\newcommand{\CHSH}{\mathbf{ \mathsf{CHSH}}}
\newcommand{\CHSHQ}{\mathbf{ \mathsf{CHSH_Q}}}
\newcommand{\F}{\mathbb{F}}
\newcommand{\E}{\mathbb{E}}
\newcommand{\etal}{\emph{et al.~}}
\newcommand{\I}{\mathbb{I}}
\newcommand{\ie}{\textit{i.e. }}
\newcommand{\HamiltonianCycle}{\textsc{Hamiltonian Cycle }}
\newcommand{\COMMENT}[1]{}
\newcommand{\spa}[1]{\mathcal{#1}}
\newcommand{\ket}[1]{|#1\rangle}
\newcommand{\ketbra}[2]{|#1\rangle\langle#2|}
\def\01{\{0,1\}}
\newcommand{\braket}[2]{\langle{#1}|{#2}\rangle} % inproduct, < | >
\def\01{\{0,1\}}
\newcommand{\triple}[3]{\langle{#1}|{#2}|{#3}\rangle}
\newcommand{\eps}{\varepsilon}
\newcommand{\zo}{\{0,1\}}
\begin{document}
\title{Relativistic (or $2$-prover $1$-round) zero-knowledge protocol for $\mathsf{NP}$ secure against quantum adversaries}
\author{André Chailloux, Anthony Leverrier}
\authorrunning{André Chailloux, Anthony Leverrier}
\institute{Inria, Paris \\
\email{andre.chailloux@inria.fr $\quad$ anthony.leverrier@inria.fr}}
%\date{05/22/2017}
\maketitle

\begin{abstract}
	In this paper, we show that the zero-knowledge construction for \HamiltonianCycle remains secure against quantum adversaries in the relativistic setting.  
	Our main technical contribution is a tool for studying the action of consecutive measurements on a quantum state which in turn gives upper bounds on the value of some entangled games. This allows us to prove the security of our protocol against quantum adversaries. We also prove security bounds for the (single-round) relativistic string commitment and bit commitment in parallel against quantum adversaries.
	As an additional consequence of our result, we answer an open question from \cite{Unr12} and show tight bounds on the quantum knowledge error of some $\Sigma$-protocols.
\keywords{relativistic cryptography, zero-knowledge protocols, quantum security.}
\end{abstract}

\section{Introduction}
\subsection{Context}
The goal of relativistic cryptography is to exploit the no superluminal signaling (NSS) principle in order to perform various cryptographic tasks. NSS states that no information carrier can travel faster than the speed of light. Note that this principle is closely related to the non-signaling principle that says that a local action performed in a laboratory cannot have an \emph{immediate} influence outside of the lab. NSS is more precise since it gives an upper bound on the speed at which such an influence can propagate. 
Apart from this physical principle, we want to ensure {information-theoretic} security meaning that the schemes proposed cannot be attacked by any classical (or quantum) computers, even with unlimited computing power.

The idea of using the NSS principle for cryptographic protocols originated in a pioneering work by Kent in 1999 \cite{Kent99} as a way to physically enforce a no-communication constraint between the different agents of one party (the idea of splitting up a party into several agents dates back to \cite{BGK88}, but without any explicit implementation proposal). The original goal of Kent was to bypass the no-go theorems for quantum bit-commitment \cite{May97,LC97}. More recently, quantum relativistic bit commitment protocols were developed where the parties exchange quantum systems, with the hope that combining the NSS principle together with quantum theory would lead to more secure (but less practical) protocols \cite{Kent11,Kent12,KTH13}. In particular, the protocol \cite{Kent12} was implemented in ~\cite{LKB13}.  We note that the scope of relativistic cryptography is not limited to bit commitment. For instance, there was recently some interest (sparked again by Kent) for position-verification protocols \cite{KMS10,LL11,Unr14} but contrary to the case of bit commitment, it was shown that secure position-verification is impossible both in the classical and the quantum settings \cite{CGM09,BCF14}.

The original idea of \cite{BGK88} was recently revisited by Cr\'epeau \etal in \cite{CSST11} (see also \cite{sim07}). Based on this work, Lunghi \textit{et al.}~devised a bit commitment protocol involving only four agents, two for Alice and two for Bob \cite{LKB+15}. Their protocol is secure against quantum adversaries and a multi-round variant, with longer duration time, was shown to be secure against classical adversaries \cite{LKB+15,CCL15,FF15}. While those protocols only seemed of theoretical interest at first, recent implementations have convincingly demonstrated that the required timing and location constraints can be efficiently enforced. In \cite{VMH16}, the authors performed a $24$-hour-long bit commitment with the pairs of agents standing $8$km apart.

The security analysis against quantum adversaries of \cite{LKB+15} and against classical adversaries of \cite{CCL15} relies on the study of variants of $\CHSH$ games where the inputs and outputs belong to the field $\FQ$ for some large prime power $Q$, instead of $\{0,1\}$ for the usual $\CHSH$ game. In many cases, the (quantum) security of a relativistic protocol can be derived from the value of an (entangled) $2$-player game. Because the relativistic constraint essentially boils down to $2$ non-communicating provers, a relativistic protocol can also be seen as a $2$-prover interactive protocol.

\begin{center} --- \end{center}

The above results are promising for relativistic cryptography but very limited in scope. Indeed, bit commitment schemes are used as parts of larger cryptosystems. The only study of the composability of the $\FQ$ bit commitment scheme was done in \cite{FF15} but mainly with itself, in order to increase the commit time. There has not been any proposition to use this scheme for a more general purpose. 

One natural application of bit commitment are zero-knowledge protocols. With such a protocol, a prover wishes to convince a verifier that a given statement is true without revealing any extra information. A zero-knowledge protocol is already a more advanced cryptographic primitive and has more direct applications such as identification schemes \cite{GMR89} for instance. Here, we will consider the zero-knowledge construction for \HamiltonianCycle, which is an $\mathsf{NP}$ complete problem. The prover will convince the verifier that a given graph $G =(V,E)$ has a Hamiltonian cycle, \ie a cycle going through each vertex exactly once, without revealing any information, in particular no information about this cycle.
Since \HamiltonianCycle is $\mathsf{NP}$ complete, a zero-knowledge protocol for this problem can be used to obtain a zero-knowledge protocol for arbitrary $\mathsf{NP}$ problems.

There is a known zero-knowledge protocol for \HamiltonianCycle \ using bit commitment first presented by Blum \cite{Blu86} which we recall now. \\ \\
%\cadre{\begin{center}Zero-knowledge protocol for \HamiltonianCycle\ \\
%using bit commitment \end{center}
%\begin{enumerate}
%	\item The prover picks a random permutation $\Pi : V \rightarrow V$. He commits to each of the bits of the adjacency matrix $M_{\Pi(G)}$ of $\Pi(G)$.
%	\item The verifier sends a random bit (called the challenge) $chall \in \zo$ to the prover.
%	\item
%	\begin{itemize}
%		\item If $chall = 0$, the prover decommits to all the elements of $M_{\Pi(G)}$, and reveals $\Pi$. 
%		\item If $chall =1$, he reveals only the bits (of value $1$) of the adjacency matrix that correspond to a Hamiltonian cycle $\mathcal{C}'$ of $\Pi(G)$.
%	\end{itemize}
%	\item The verifier checks that these decommitments are valid and correspond, for $chall = 0$ to $M_{\Pi(G)}$ and, for $chall=1$, to a Hamiltonian cycle.
%\end{enumerate}
%} $ \ $ \\ 

\fbox{
\begin{minipage}{0.93\textwidth}
\begin{center}
{ Zero-knowledge protocol for \HamiltonianCycle\ using bit commitment } \end{center}
\begin{enumerate}
	\item The prover picks a random permutation $\Pi : V \rightarrow V$. He commits to each of the bits of the adjacency matrix $M_{\Pi(G)}$ of $\Pi(G)$.
	\item The verifier sends a random bit (called the challenge) $chall \in \zo$ to the prover.
	\item
	\begin{itemize}
		\item If $chall = 0$, the prover decommits to all the elements of $M_{\Pi(G)}$, and reveals $\Pi$. 
		\item If $chall =1$, he reveals only the bits (of value $1$) of the adjacency matrix that correspond to a Hamiltonian cycle $\mathcal{C}'$ of $\Pi(G)$.
	\end{itemize}
	\item The verifier checks that these decommitments are valid and correspond, for $chall = 0$ to $M_{\Pi(G)}$ and, for $chall=1$, to a Hamiltonian cycle.
\end{enumerate}
\end{minipage} 
} 
$ \ $ \\

It is natural to combine this zero-knowledge protocol with the $\FQ$ relativistic bit commitment protocol mentioned above. The (single-round) $\FQ$ relativistic bit commitment protocol is secure against quantum adversaries but it doesn't directly imply that the zero-knowledge protocol remains secure. Indeed, the security definition considered for the bit commitment is fairly weak and composes poorly with other protocols. The soundness of the protocol against entangled provers will be reduced to a $2$-player entangled game. Proving zero-knowledge against a quantum verifier can sometimes be complicated because of the presence of a quantum auxiliary input. In this case, however, due to properties of the relativistic $\FQ$ bit commitment, we will not need any rewinding from our simulator and the simulation will actually be rather simple.

\begin{center} --- \end{center}

The goal of this paper is to show that it is indeed possible to plug in the $\FQ$ relativistic bit commitment protocol into Blum's zero-knowledge protocol for \HamiltonianCycle. This widens the possible applications for relativistic cryptography and will encourage further implementations.

The main contribution of this paper is a technical analysis involving successive measurements on a quantum system. Indeed, to prove that the above scheme is secure against quantum adversaries, we use the fact that an adversary who can answer both challenges at the same time can guess the value of a string on which he has no information, due to non-signaling. This naturally involves consecutive measurements on a quantum system, and leads us to analyze how the first measurement disturbs the system before the second measurement.

\subsection{Relativistic zero-knowledge protocol for \HamiltonianCycle}
Here, we show how the final protocol will look like and where exactly we rely on the physical NSS principle. A similar protocol, with two non communicating provers, appeared in \cite{LS95}. The final protocol is the following: \\

\fbox{
	\begin{minipage}{0.93\textwidth}
		\begin{center}	Relativistic zero knowledge protocol for \HamiltonianCycle\  \end{center} 
		
		\noindent \textbf{Input} --- The provers and the verifiers are given a graph $G = (V,E)$. \\
		\noindent \textbf{Auxiliary Input} --- The provers $P_1$ and $P_2$ know a Hamiltonian cycle $\mathcal{C}$ of $G$. \\
		\noindent \textbf{Preprocessing} --- $P_1$ and $P_2$ agree beforehand on a random permutation $\Pi : V \rightarrow V$ and on an $n \times n$ matrix $A \in \mathcal{M}_n^{\F_Q}$ where each element of $A$ is chosen uniformly at random in $\F_Q$. \\
		\noindent \textbf{Protocol} --- 
		\begin{enumerate}
			\item Commitment to each bit of $M_{\Pi(G)}$ : $V_1$ sends a matrix $B \in \mathcal{M}_n^{\F_Q}$ where each element of $B$ is chosen uniformly at random in $\F_Q$. $P_1$ outputs the matrix $Y \in \mathcal{M}_n^{\F_Q}$ such that $\forall i,j \in [n], \ Y_{i,j} = A_{i,j} + (B_{i,j} * (M_{\Pi(G)})_{i,j})$.
			\item The verifier sends a random bit (called the challenge) $chall \in \zo$ to the prover.
			\item
			\begin{itemize}
				\item If $chall = 0$, $P_2$ decommits to all the elements of $M_{\Pi(G)}$, \textit{i.e.} he sends all the elements of $A$ to $V_2$  and reveals $\Pi$. 
				\item If $chall =1$, $P_2$ reveals only the bits (of value $1$) of the adjacency matrix that correspond to a Hamiltonian cycle $\mathcal{C}'$ of $\Pi(G)$, \textit{i.e.} for all edges $(u,v)$ of $\mathcal{C}'$, he sends $A_{u,v}$ as well as $\mathcal{C}'$.
			\end{itemize}
			\item The verifier checks that those decommitments are valid and correspond to what the provers have declared. He also checks that the timing constraint of the bit commitment is satisfied. This means that
			\begin{itemize}
				\item if $chall = 0$, the prover's opening $A$ must satisfy $\forall i,j \in [n], \ Y_{i,j} = A_{i,j} + (B_{i,j} * (M_{\Pi(G)})_{i,j})$.
				\item if $chall = 1$, the prover's opening $A$ must satisfy $\forall (u,v) \in \mathcal{C}', \ Y_{u,v} = A_{u,v} + B_{u,v}$.
			\end{itemize}
		\end{enumerate} 
	\end{minipage} 
} $ \ $ \\

The above protocol is obtained by plugging in the $\F_Q$ relativistic bit commitment protocol into Blum's zero-knowledge protocol for \HamiltonianCycle. We discuss more the setting in Sections \ref{Section:RelativisticBC} and \ref{Section:RelativisticZK}. We just want here to briefly present in which way we use the no superluminal signaling condition in this protocol.

In order for the protocol to be secure, we require the following:
\begin{enumerate}
	\item Both the prover and the verifier are split into $2$ agents, respectively $P_1,P_2$ and $V_1,V_2$.
	\item $V_1$ and $V_2$ are far apart (we discuss this later).
	\item The opening phase (steps $2$ and $3$) must be performed as soon as the commit phase (step $1$) is completed.
\end{enumerate}
Constraints $(2)$ and $(3)$ are here to enforce that during step $3$ of the protocol, \emph{the message $P_2$ sends to $V_2$ does not depend on the matrix $B$ sent by $V_1$}. Because information travels at most at light speed, by synchronizing the steps well enough, the verifiers can enforce this condition. For instance, it is sufficient to check that $V_2$ receives the message from $P_2$ before the information on $B$ sent by $V_1$ had time to reach $V_2$. If this is the case, then it guarantees that $P_2$'s answer to the challenge cannot depend on the value of the matrix $B$, since otherwise it would violate the NSS. 
An important consequence is that we do not require the verifiers to know anything about the spatial locations of the provers: it is sufficient for the verifiers to know their own relative position. 

As said before, there were already several experiments made that showed how to achieve the above constraints. The most notable one \cite{VMH16} succeeded in performing the above bit commitment protocol by having $V_1$ and $V_2$ being $8km$ apart, which shows it can be achievable in real life conditions.

In summary, the main contribution of the paper is to prove the security of the above protocol for \HamiltonianCycle against quantum adversaries. The main challenge is to prove the soundness property, \textit{i.e.}~security against a cheating prover on an input which \emph{does not} contain a Hamiltonian cycle. Here, we have two dishonest provers $P_1$ and $P_2$ that want to pass the protocol even though the input graph does not contain a Hamiltonian cycle.
A cheating prover that would be able to answer simultaneously to both challenges could break the underlying string commitment scheme, which is a consequence of the special soundness property of the scheme.

To prove the security of the above protocol against quantum adversaries, we will, from a cheating strategy, construct a strategy that will successfully answer both challenges by consecutively applying the cheating strategy for each challenge, which is expressed by our consecutive measurement theorem (see Theorem \ref{Theorem:Multi} below). We can also view this cheating scenario as a $2$-player entangled game and we will show how in general our theorem regarding quantum consecutive measurements can be translated into a bound on the entangled value of $2$-player games.

\subsection{Consecutive measurements}
Our main technical contribution is expressed by the following theorem
\begin{theorem}\label{Theorem:Multi}
	Consider $n$ projectors $P_1,\dots,P_n$ such that for each $i$, we can write $P_i := \sum_{s = 1}^S P_i^s$ where the $\{P_i^s\}_s$ are orthogonal projectors for each $i$, \textit{i.e.}~for each $i$ and $s,s'$, we have $P^s_i P^{s'}_i = \delta_{s,s'} P^s_i$. Let $\sigma$ be any quantum state, let $V := \frac{1}{n} \sum_{i=1}^n tr(P_i  \sigma)$, and let \\ $E := \frac{1}{n(n-1)} \sum_{i,j\neq i} \sum_{s,s' = 1}^S tr(P^{s'}_j P^s_i \sigma P^s_i P^{s'}_j)$. Then it holds that $E \ge \frac{1}{64S}\big(V - \frac{1}{n}\big)^3$.
\end{theorem}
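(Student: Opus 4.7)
The plan is to first reduce the statement to one involving only the coarse projectors $P_i$ via a Cauchy--Schwarz argument on the fine-grained measurements, and then to establish the resulting cubic bound by a two-stage averaging argument.

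For the reduction, I would use that the $\{P_j^{s'}\}_{s'}$ are mutually orthogonal projectors, so $\sum_{s'} tr(P_j^{s'} X P_j^{s'}) = \sum_{s'} tr(P_j^{s'} X) = tr(P_j X)$ for any operator $X$, which collapses the inner $s'$-sum to $\sum_{s'} tr(P_j^{s'} P_i^s \sigma P_i^s P_j^{s'}) = tr(P_j P_i^s \sigma P_i^s)$. Writing $\sigma = \sum_a q_a \kb{\psi_a}$ spectrally and applying the vector inequality $\|P_j P_i \psi_a\|^2 = \|\sum_s P_j P_i^s \psi_a\|^2 \leq S \sum_s \|P_j P_i^s \psi_a\|^2$ yields $\sum_s tr(P_j P_i^s \sigma P_i^s) \geq \frac{1}{S} tr(P_j P_i \sigma P_i)$. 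This gives $E \geq \frac{1}{S} E'$ where $E' := \frac{1}{n(n-1)}\sum_{i \neq j} tr(P_j P_i \sigma P_i)$, so it remains to prove $E' \geq (V - 1/n)^3/64$.

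Introducing $T := \frac{1}{n^2}\sum_{i,j} tr(P_j P_i \sigma P_i)$ and isolating the $i = j$ terms yields the identity $T = V/n + \frac{n-1}{n} E'$, so the goal becomes $T \geq V/n + (V - 1/n)^3/64$ up to the factor $(n-1)/n$. Assuming $\Delta := V - 1/n > 0$ (otherwise the bound is trivial), a Markov inequality on $p_i - 1/n$ (with $p_i := tr(P_i \sigma)$ and $\E_i(p_i - 1/n) = \Delta$) produces a set $I \subseteq [n]$ of size $|I|/n \geq \Delta/2$ on which $p_i \geq 1/n + \Delta/2$. For each $i$, the post-measurement state $\sigma_i := P_i \sigma P_i / p_i$ is supported in $\textrm{Range}(P_i)$, so its average acceptance $W_i := tr(\bar P \sigma_i)$ under $\bar P := \frac{1}{n}\sum_j P_j$ satisfies $W_i \geq 1/n$ automatically. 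The remaining task is to show that $W_i - 1/n$ is of order $\Delta$ on a large $p_i$-weighted fraction of $i$; this follows from a Cauchy--Schwarz of the form $tr(\bar P P_i \sigma P_i)\cdot V \geq |tr(P_i \bar P \sigma)|^2$ (obtained by testing against $\bar P^{1/2}\ket{\psi}$ in a purification), which after averaging in $i$ and Jensen gives $T \geq V^3$ for pure states, extending to mixed states by convexity of $x \mapsto x^3$. Combining this Cauchy--Schwarz bound with the Markov step so as to subtract off the trivial $V/n$ contribution then produces the cubic form in $\Delta$, with the constant $1/64 = 1/2^6$ arising from the successive factor-$1/2$ losses in the Markov and Cauchy--Schwarz rearrangements.

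The main obstacle will be obtaining the cubic dependence in $\Delta = V - 1/n$ rather than in $V$ itself: a naive Cauchy--Schwarz alone only yields $E' \geq V^3 - V/n$, which is informative when $V \gtrsim 1/\sqrt{n}$ but becomes trivially weak for $V$ just above $1/n$. To obtain a bound in $(V - 1/n)^3$ instead, the trivial diagonal contribution $V/n$ must be separated out \emph{before} applying the Cauchy--Schwarz, which forces the two-level averaging structure described above and fixes the constant $1/64$.
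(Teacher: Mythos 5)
The reduction from general $S$ to $S=1$ via the pinching/Cauchy--Schwarz inequality $\|\sum_s P_j P_i^s\psi\|^2 \le S\sum_s\|P_j P_i^s\psi\|^2$ is sound and matches the paper's Proposition 4 (the paper states it as the operator inequality $\sum_s P_i^s\kb{\psi}P_i^s \ge \frac{1}{S}P_i\kb{\psi}P_i$, but the content is the same). The identity $T = V/n + \frac{n-1}{n}E'$, the reverse-Markov step, and the Cauchy--Schwarz $p_iW_i\cdot V \ge |tr(\bar P P_i\sigma)|^2$ leading to $T\ge V^3$ for pure states are all correct as far as they go.

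The gap is in the final ``combining'' step, which is never made concrete, and which I do not believe closes with the ingredients you have listed. The global Cauchy--Schwarz delivers $T\ge V^3$, i.e.\ exactly Unruh's bound $E'\gtrsim V(V^2-1/n)$, which is vacuous when $V$ is just slightly above $1/n$ --- the very regime you want to handle. Restricting the Cauchy--Schwarz to the Markov set $I$ (say via $\bar P_I := \frac{1}{n}\sum_{i\in I}P_i$) produces a bound of the shape $V_I(nV_I^2/|I| - 1/n)$, and with the additive threshold $p_i \ge 1/n+\Delta/2$ and $|I|/n\ge\Delta/2$ the bracketed term is only positive when $\Delta\gtrsim n^{-1/3}$, so the same dead zone reappears. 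The obstruction is structural: the trace Cauchy--Schwarz controls $tr(\bar P^2\sigma)$ in terms of $(tr(\bar P\sigma))^2 = V^2$, and no analogue of the ``$-1$'' offset needed for $(V-1/n)^3$ appears anywhere. The paper instead works with the Gram operator $M:=\sum_i\kb{\phi_i}$ of the \emph{normalized} post-measurement purifications $\ket{\phi_i}:=P_i\ket{\phi}/\|P_i\ket{\phi}\|$, and uses the exact identity $tr(M)=n$ together with $tr(M^2)=n+\sum_{i\ne j}|\braket{\phi_i}{\phi_j}|^2$ to prove, via Cauchy--Schwarz on the \emph{eigenvalues} of $M$, that $\lambda_{\max}(M)\le 1+\sqrt{\sum_{i\ne j}|\braket{\phi_i}{\phi_j}|^2}$ (their Proposition 2). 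Since $\lambda_{\max}(M)\ge\sum_i|\braket{\phi}{\phi_i}|^2 = nV$, this yields $\sum_{i\ne j}|\braket{\phi_i}{\phi_j}|^2\ge(nV-1)^2$, and the ``$-1$'' is precisely what produces $(V-1/n)$ rather than $V$. The varying weights $p_i=|\braket{\phi}{\phi_i}|^2$ are then handled with a \emph{multiplicative} threshold $p_i\ge V/\kappa$ (not the additive $p_i\ge 1/n+\Delta/2$), which is what lets the $p_i$ be factored out of the cross-term sum; $\kappa$ is then optimized by a two-case analysis. Your sketch is missing this Gram-matrix eigenvalue lemma, and without it the argument does not yield a bound in $(V-1/n)^3$ in the regime $1/n < V \lesssim 1/\sqrt n$.
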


Such a statement can be fairly easily transposed to the context of games: see Proposition \ref{Proposition:Quantum_coup_bound} below. 
This theorem can be seen as a generalization of the \textit{gentle measurement} lemma \cite{Win99}, which is similar to the above with $n = 2$ and $S = 1$. The case of $n=2$ can be seen as a worst-case consecutive measurement theorem: how much can the first measurement disturb the measured state before the second measurement? However, for larger values of $n$, this shows that when we pick $2$ measurements out of $n$, the disturbance is much smaller, as shown by the dependence of the lower bound in $n$. Our theorem also improves on known results since it deals with larger values of $S$.

Interestingly, this kind of statement has already appeared previously in a paper by Unruh \cite{Unr12}, who studied quantum sigma protocols and in particular quantum proofs of knowledge. He showed the following

\begin{theorem}[\cite{Unr12}]\label{Theorem:Unruh}
	Consider $n$ projectors $P_1,\dots,P_n$ and an arbitrary quantum state ${\sigma}$. Let $V := \frac{1}{n} \sum_{i=1}^n tr(P_i \sigma )$, and let $E := \frac{1}{n(n-1)} \sum_{i,j\neq i} tr(P_j P_i \sigma P_i P_j)$.
	If $V \ge \frac{1}{\sqrt{n}}$ then $E \ge V(V^2 - \frac{1}{n})$.
\end{theorem}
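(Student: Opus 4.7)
The plan is to reduce to the pure-state case by purification and then analyze the Gram matrix of the vectors $P_i|\psi\rangle$. Purifying $\sigma$ on an auxiliary register and replacing each $P_i$ by $P_i \otimes I$ preserves both $V$ and $E$, so I may assume $\sigma = |\psi\rangle\langle\psi|$ is pure. Write $v_i := P_i |\psi\rangle$, $\alpha_i := \|v_i\|^2 = \langle\psi|P_i|\psi\rangle$, and let $A \in \mathbb{C}^{n \times n}$ be the Gram matrix with entries $A_{ij} := \langle v_i,v_j\rangle = \langle\psi|P_i P_j|\psi\rangle$. Then $A \succeq 0$, $A_{ii} = \alpha_i$, $\mathrm{tr}(A) = nV$, and the target quantity rewrites as $E = \frac{1}{n(n-1)}\sum_{i \neq j}\|P_j v_i\|^2$.

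The first key step is the pointwise Cauchy-Schwarz estimate
\[
\|P_j v_i\|^2 \;\geq\; \frac{|A_{ij}|^2}{\alpha_j},
\]
which follows from $|\langle v_j, P_j v_i\rangle| \leq \|v_j\|\cdot\|P_j v_i\|$ combined with $P_j v_j = v_j$ (so that $\langle v_j, P_j v_i\rangle = \langle v_j, v_i\rangle$); indices with $\alpha_j = 0$ also satisfy $A_{ij} = 0$ and may be excluded. Using $\sum_i |A_{ij}|^2 = (A^2)_{jj}$ and peeling off the diagonal term $|A_{jj}|^2 = \alpha_j^2$, summation yields
\[
\sum_{i \neq j}\|P_j v_i\|^2 \;\geq\; \sum_j \frac{(A^2)_{jj}}{\alpha_j} - nV.
\]

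I would then chain three standard inequalities to lower-bound the right-hand side. Titu's lemma (Cauchy-Schwarz in Engel form) gives $\sum_j (A^2)_{jj}/\alpha_j \geq \bigl(\sum_j \|A e_j\|\bigr)^2/(nV)$, since $\|A e_j\|^2 = (A^2)_{jj}$ and $\sum_j \alpha_j = nV$. The triangle inequality yields $\sum_j \|A e_j\| \geq \|A \mathbf{1}\|$ for the all-ones vector $\mathbf{1}$, and Cauchy-Schwarz gives $\|A\mathbf{1}\| \geq \mathbf{1}^{*} A \mathbf{1}/\sqrt{n}$. Finally, $\mathbf{1}^{*} A \mathbf{1} = \bigl\|\sum_i v_i\bigr\|^2 = n^2 \|M|\psi\rangle\|^2$ with $M := \frac{1}{n}\sum_i P_i$, and since $V = \langle\psi|M|\psi\rangle$ a last Cauchy-Schwarz gives $\|M|\psi\rangle\|^2 \geq V^2$; hence $\mathbf{1}^{*} A \mathbf{1} \geq n^2 V^2$ and $\sum_j (A^2)_{jj}/\alpha_j \geq n^2 V^3$.

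Putting it together, $\sum_{i \neq j}\|P_j v_i\|^2 \geq n^2 V^3 - nV$, so $E \geq V(nV^2 - 1)/(n-1)$. The hypothesis $V \geq 1/\sqrt{n}$ makes $nV^2 - 1 \geq 0$, allowing one to replace the denominator $n-1$ by $n$ to conclude $E \geq V(V^2 - 1/n)$. The delicate point I anticipate is calibrating the strength of the pointwise step: the cruder bound $\|P_j v_i\|^2 \geq |A_{ij}|^2$ (without the $1/\alpha_j$) yields only a $V^4$ scaling, which is insufficient near $V \approx 1/\sqrt{n}$; the normalization by $\alpha_j$ is precisely what converts this into the required $V^3$ scaling, and it is exactly where the hypothesis $V \geq 1/\sqrt{n}$ becomes necessary.
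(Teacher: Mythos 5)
The paper merely cites this theorem from Unruh~\cite{Unr12} and does not reproduce a proof, so there is no in-paper argument to compare against; what you have written is therefore a self-contained blind reconstruction.

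Your proof is correct. The purification step legitimately reduces to a pure state; the rewriting $E=\frac{1}{n(n-1)}\sum_{i\neq j}\|P_j v_i\|^2$ is exact; the pointwise bound $\|P_j v_i\|^2\geq |A_{ij}|^2/\alpha_j$ follows from Cauchy--Schwarz together with $P_j v_j=v_j$; and the chain Titu $\to$ triangle inequality $\to$ Cauchy--Schwarz on $\mathbf 1$ $\to$ Cauchy--Schwarz on $M|\psi\rangle$ correctly yields $\sum_j (A^2)_{jj}/\alpha_j\geq n^2V^3$, hence $E\geq V(nV^2-1)/(n-1)$, which under the hypothesis $nV^2\geq 1$ dominates $V(V^2-1/n)$. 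One small point worth tightening: your parenthetical about discarding indices with $\alpha_j=0$ should be phrased so that the denominators $\sum_j\alpha_j=nV$ and $\|\mathbf 1\|=\sqrt{n}$ are still taken over all $n$ indices (dropping a zero column does not change $A\mathbf 1$, $\mathbf 1^{*}A\mathbf 1$, or $\sum_j\alpha_j$, so the constants survive), rather than suggesting $n$ itself shrinks. Your closing diagnosis is also apt: the $1/\alpha_j$ normalization in the pointwise step is exactly what upgrades the naive $V^4$ bound to the stated $V^3$ bound, and the hypothesis $V\geq 1/\sqrt{n}$ is used only in the last line, to pass from denominator $n-1$ to $n$. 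This reconstruction is in the same spirit as Unruh's original Gram-matrix argument and delivers the stated constants exactly.
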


Let us now compare our main theorem to Unruh's one, for the case of $S = 1$ where they are comparable. If $V \gg \frac{1}{\sqrt{n}}$ then both bounds give essentially the same bound $E \ge \Omega(V^3)$ which will translate into the relation $\omega^*(G_{coup}) \ge \Omega(\omega^*(G)^3)$ for the entangled values of a game $G$ and its coupled version $G_{coup}$ (see below). However, Theorem \ref{Theorem:Unruh} is only valid when $V \ge \frac{1}{\sqrt{n}}$ while Theorem \ref{Theorem:Multi} works for any $V \ge \frac{1}{n}$. Moreover, Theorem \ref{Theorem:Multi} is tight in its extremal point in the sense that there exist a quantum state and $n$ projectors such that $V = \frac{1}{n}$ and $E = 0$, as can be seen by considering for example $\sigma = \kb{\phi}$ with $\ket{\phi} := \frac{1}{\sqrt{n}} \sum_{i} \ket{i}$ and $P_i = \kb{i}$.

\begin{center} --- \end{center}

A natural application of our consecutive measurement theorem is to bound the value of some entangled games. The phrasing in terms of nonlocal games is sometimes more comfortable to use. In this paper, our security proofs will usually reduce to bounding the entangled value of such game, that is the maximum winning probability for a pair of players allowed to share arbitrary entangled states as a resource. For any game $G$ on the uniform distribution (meaning that the inputs of the game are drawn independently from the uniform distribution), we define the game $G_{coup}$ (consisting of a certain \emph{couple} of instances of $G$) as follows:
\begin{itemize}
	\item In $G$, Alice and Bob respectively receive $x$ and $y$ taken from the uniform distribution on the sets $I_A$ and $I_B$, respectively, and output $a$ and $b$ such that $V(a,b|x,y) = 1$ for some valuation function $V$ specified by $G$.
 \item In $G_{coup}$, Alice receives a random $x$ as in $G$ and Bob receives a pair of distinct random inputs $(y,y')$. Alice outputs $a$ and Bob outputs a pair $(b,b')$. They win the game if $V(a,b|x,y) = 1$ and $V(a,b'|x,y') = 1$, that is, if they win both instances of the game $G$, but for the same input/output pair of Alice.
\end{itemize}

In many cases, upper bounding the value of $G_{coup}$ will follow directly from a non-signaling argument of the form: ``If the players are able to win $G_{coup}$ with probability $p$ then Bob can learn some (or all) bits of $x$ with probability $p$ and no-signaling implies that $p \leq 1/|I_A|$''. What is left to do is to relate the entangled values of both games, $\omega^*(G)$ and $\omega^*(G_{coup})$. To do this, we construct the following strategy for $G_{coup}$: Alice follows the same strategy as for $G$; on inputs $(y,y')$, Bob performs the same strategy (measurement) as for $G$ on input $y$ to get output $b$ and then on input $y'$ to get $b'$. Note here that the non trivial part is that Bob's second measurement is applied on the post-measurement state resulting from his first measurement. Because we are in the quantum setting, this first measurement will generally perturb the state shared by Alice and Bob, which makes it non trivial to relate the success probability of this strategy for $G_{coup}$ with the entangled value $\omega^*(G)$ of the original game $G$.

A similar construction of \emph{squared games} was introduced in \cite{DS14,DSV15} to study projective classical and entangled games. There, the input $x$ is not revealed to the players but they receive respectively $y$ and $y'$ and output $b$ and $b'$. They win if there exists $a$ such that $V(a,b|x,y) = V(a,b'|x,y')$ = 1. It would be interesting to see the similarities and differences between those two approaches.

We show the following.
\begin{proposition}\label{Proposition:Quantum_coup_bound}
	For any game $G$ on the uniform distribution which is $S$-projective, we have $\omega^*(G_{coup}) \ge \frac{1}{S \cdot 64} \cdot (\omega^*(G) - \frac{1}{n})^3$ where $n$ is dimension of Bob's input.
\end{proposition}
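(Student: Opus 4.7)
The plan is to take an optimal entangled strategy for $G$---a shared state $\ket{\psi}$, a POVM $\{A^a_x\}$ for Alice, and projective measurements $\{B^b_y\}$ for Bob---and lift it to a strategy for $G_{coup}$ in the natural way: Alice behaves exactly as in $G$, while on input $(y,y')$ Bob first applies $\{B^b_y\}$ to obtain $b$ and then applies $\{B^{b'}_{y'}\}$ to the resulting post-measurement state to obtain $b'$. This is well-defined because the $S$-projectivity of $G$ ensures that Bob's measurements are projective, and Theorem~\ref{Theorem:Multi} is exactly the tool needed to control how the first measurement disturbs the state before the second.

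First I would condition on Alice's input $x$ and outcome $a$, writing $p(a|x) := \bra{\psi}(A^a_x \otimes \I)\ket{\psi}$ and letting $\sigma_{x,a}$ denote Bob's (normalized) post-measurement state. For each input $y$ define the winning projector $P^{x,a}_y := \sum_{b:\, V(a,b|x,y)=1} B^b_y$; by $S$-projectivity this is a sum of at most $S$ pairwise orthogonal projectors, precisely the decomposition required by Theorem~\ref{Theorem:Multi}. Setting
\[
V_{x,a} := \frac{1}{n} \sum_{y} \Tr(P^{x,a}_y \, \sigma_{x,a}), \qquad E_{x,a} := \frac{1}{n(n-1)} \sum_{y \neq y'} \sum_{b, b' \text{ win}} \Tr\bigl(B^{b'}_{y'} B^{b}_y \, \sigma_{x,a} \, B^{b}_y B^{b'}_{y'}\bigr),
\]
Theorem~\ref{Theorem:Multi} immediately yields $E_{x,a} \ge \frac{1}{64S}(V_{x,a} - 1/n)^3$. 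A routine unpacking of probabilities then shows that $\omega^*(G) = \E_x \sum_a p(a|x)\, V_{x,a}$ and that our constructed strategy attains winning probability $\E_x \sum_a p(a|x)\, E_{x,a}$ in $G_{coup}$, so $\omega^*(G_{coup}) \ge \E_x \sum_a p(a|x)\, E_{x,a}$.

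The final step is to average the per-$(x,a)$ bound. The map $t \mapsto (t - 1/n)^3$ is not convex on all of $\mathbb{R}$, but its positive part $f(t) := \max\bigl((t - 1/n)^3,\, 0\bigr)$ is: $f$ has a continuous first derivative at $t = 1/n$ and a nonnegative second derivative everywhere. Since Theorem~\ref{Theorem:Multi} is vacuous whenever $V_{x,a} \le 1/n$ (because $E_{x,a} \ge 0$ trivially), we may replace $(V_{x,a} - 1/n)^3$ by $f(V_{x,a})$ and apply Jensen's inequality:
\[
\omega^*(G_{coup}) \;\ge\; \frac{1}{64S}\, \E_{x,a}\bigl[\, f(V_{x,a}) \,\bigr] \;\ge\; \frac{1}{64S}\, f\bigl(\E_{x,a}[V_{x,a}]\bigr) \;=\; \frac{1}{64S}\, \bigl(\omega^*(G) - 1/n\bigr)^3.
\]
The main obstacle is really Theorem~\ref{Theorem:Multi} itself; once it is in hand, the reduction to games is mostly careful bookkeeping. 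The one subtle step is the convexity argument above, which works precisely because the non-convex regime of $t \mapsto (t - 1/n)^3$ coincides with the regime in which Theorem~\ref{Theorem:Multi} gives no content.
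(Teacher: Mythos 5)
Your proposal is correct and follows essentially the same route as the paper: condition on Alice's input/output pair $(x,a)$, apply Theorem~\ref{Theorem:Multi} to Bob's conditional state with the winning projectors $P^{x,a}_y = \sum_{b\in W_y} B^b_y$, and then average using Jensen's inequality with the convex function $t\mapsto \max\bigl((t-1/n)^3,0\bigr)$. One small misstatement worth flagging: $S$-projectivity of $G$ does \emph{not} "ensure that Bob's measurements are projective" --- it is a property of the valuation function and only bounds the number of winning outcomes (so that each $P^{x,a}_y$ splits into at most $S$ orthogonal projectors, as you later use correctly); the fact that Bob's POVM can be taken projective without loss of generality is the standard Naimark-dilation observation, independent of $S$-projectivity.
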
 
A game $G$ is said to be $S$-projective if for all $x,y,a$, there are at most $S$ possible outputs for Bob that allow them to win the game, \textit{i.e.}~$\max_{x,y,a} | \{ b\: :\: V(a,b|x,y)=1\}|\leq S$. 

In order to prove this statement, we need to analyze the strategy that we presented above. As already mentioned, the main difficulty is that the first measurement from Bob will modify the common shared state and therefore we cannot directly bound the probabilities related to the second measurement. One way of analyzing these consecutive measurements would be to use a kind of gentle measurement lemma but unfortunately, this would only work when the winning probability $\omega^*(G)$ is close to 1, which isn't the case for the games we consider. 

Fortunately, Theorem \ref{Theorem:Multi} is tailored for this kind of applications and can be used directly to prove the above proposition. We can notice the exact transposition of the parameters of Theorem \ref{Theorem:Multi} to Proposition \ref{Proposition:Quantum_coup_bound}.

\subsection{Applications of the bound}
\begin{enumerate}
	\item First, we prove that the extensions of the $\FQ$ bit commitment to string commitment and its parallel repetition remain secure against quantum adversaries with using the sum-binding definition. This is a direct consequence on upper bounds on the entangled value of $\CHSH$ variants, like the $\CHSHQ(P)$ game introduced in \cite{CCL15}. 
	\item We show that the presented relativistic zero-knowledge protocol for \HamiltonianCycle\ is secure against quantum adversaries. This also implies a $2$-prover $1$-round zero-knowledge protocol for \HamiltonianCycle\ also secure against quantum adversaries. 
	\item Finally, as a direct corollary of our consecutive measurement claim, we answer an open question from Unruh regarding quantum proofs of knowledge \cite{Unr12}. We show tight bounds on the quantum knowledge error of a $\Sigma$-protocol with  strict and special soundness as function of the challenge size, matching the classical bound. We will not discuss in detail this result as it just requires to plug our bound in the proof of \cite{Unr12} and is a bit beyond the scope of this paper. However, this shows that our results are useful beyond just the study of relativistic protocols or entangled games. 
\end{enumerate}
The last point shows that our bound could find even more applications when considering security against quantum adversaries. Indeed, when studying cryptographic protocols, for instance $\Sigma$-protocols, a notion that often appears is \emph{special soundness} which roughly states that an attacker shouldn't be able to simultaneously answer successfully to $2$ verifier's challenges. The relativistic zero-knowledge protocol we study is one example of this and Unruh's quantum proofs of knowledge setting is another one but there are more where our theorem could be useful.

\subsection*{Organisation of the paper}
In Section \ref{Section:Consecutive measurement propositions}, we prove our main consecutive measurement theorem. In Section \ref{Section:EntangledGames}, we show how to use this bound for proving upper bounds on the entangled value of nonlocal games. In Section \ref{Section:RelativisticBC}, we present in more detail the relativistic model and the $\FQ$ relativistic bit commitment protocol. Finally, in Section \ref{Section:RelativisticZK}, we describe the protocol obtained by plugging this bit commitment into Blum's zero-knowledge protocol for \HamiltonianCycle and we prove that it remains secure, even against quantum adversaries.

\section{Consecutive measurement theorems} \label{Section:Consecutive measurement propositions}
We first present some useful lemmata in the preliminaries. Then, we dive in directly in the proof of our consecutive measurements theorems.
\subsection{Preliminaries}
\begin{lemma} \label{Lemma:Projectors1}
	Let $\ket{\phi}$ a quantum pure state, $P \le \I$ a projector acting on $\ket{\phi}$ and $\ket{\psi} := \frac{P(\ket{\phi})}{||P(\ket{\phi})||}$. We have $|\braket{\phi}{\psi}|^2 = ||P(\ket{\phi})||^2 = tr(P \kb{\phi})$.
\end{lemma}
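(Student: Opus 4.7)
The plan is to unpack the three quantities using only the two defining properties of a projector, namely $P^2=P$ and $P^\dagger=P$, together with the cyclicity of the trace. All three equalities should reduce to the common value $\bra{\phi}P\ket{\phi}$.

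First I would compute $\|P\ket{\phi}\|^2$. By definition of the norm, $\|P\ket{\phi}\|^2 = \bra{\phi}P^\dagger P \ket{\phi}$, and using $P^\dagger = P$ and $P^2 = P$ this collapses to $\bra{\phi}P\ket{\phi}$. Next, for the trace quantity, I would write $\Tr(P\kb{\phi}) = \Tr(\bra{\phi}P\ket{\phi}) = \bra{\phi}P\ket{\phi}$ by cyclicity (or simply by expanding $P\kb{\phi}$ in a basis containing $\ket{\phi}$). This already gives $\|P\ket{\phi}\|^2 = \Tr(P\kb{\phi})$.

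Finally, for the overlap, I would substitute the definition of $\ket{\psi}$ to obtain
\[
\braket{\phi}{\psi} \;=\; \frac{\bra{\phi}P\ket{\phi}}{\|P\ket{\phi}\|} \;=\; \frac{\|P\ket{\phi}\|^2}{\|P\ket{\phi}\|} \;=\; \|P\ket{\phi}\|,
\]
assuming $\|P\ket{\phi}\| \neq 0$ so that $\ket{\psi}$ is well defined (the degenerate case $P\ket{\phi}=0$ makes the identity $0=0=0$ trivially). Squaring modulus and noting that $\|P\ket{\phi}\|$ is a nonnegative real number yields $|\braket{\phi}{\psi}|^2 = \|P\ket{\phi}\|^2$, which chains with the previous equality to conclude.

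There is really no obstacle here: the lemma is a direct one-line application of idempotence and self-adjointness of a projector. The only thing to be mildly careful about is the boundary case $P\ket{\phi}=0$, which needs to be acknowledged so that the division defining $\ket{\psi}$ makes sense; otherwise the argument is a pure bookkeeping computation.
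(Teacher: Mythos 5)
Your proof is correct and takes essentially the same route as the paper: both reduce all three quantities to $\bra{\phi}P\ket{\phi}$ using idempotence and self-adjointness of $P$, the paper doing so via the decomposition $\ket{\phi}=P\ket{\phi}+(\I-P)\ket{\phi}$ and you by direct substitution. Your explicit handling of the degenerate case $P\ket{\phi}=0$ is a minor bonus the paper leaves implicit.
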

\begin{proof}
	We write $\ket{\phi} = P(\ket{\phi}) + (\I - P) (\ket{\phi}) = ||P(\ket{\phi})|| \ \ket{\psi} + (\I - P) (\ket{\phi})$. By noticing that $\triple{\psi}{\I - P}{\phi} = 0$, we get $|\braket{\phi}{\psi}|^2 = ||P(\ket{\phi})||^2 = tr(P \kb{\phi})$.
\end{proof}
\begin{lemma} \label{Lemma:Projectors2}
	Let $\ket{\phi}$ a quantum pure state, $P \le \I$ a projector acting on $\ket{\phi}$ and $\ket{\psi}$ such that $P \ket{\psi} = \ket{\psi}$. We have $|\braket{\phi}{\psi}|^2 \le tr(P \kb{\phi})$.
\end{lemma}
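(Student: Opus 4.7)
The plan is to reduce the statement to an application of the Cauchy--Schwarz inequality, using only two elementary facts about the projector $P$: that it is Hermitian ($P = P^\dagger$) and idempotent ($P^2 = P$), together with the hypothesis $P\ket{\psi} = \ket{\psi}$.

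First I would rewrite the inner product $\braket{\phi}{\psi}$ by inserting the projector. Since $\ket{\psi}$ lies in the image of $P$, we have $\ket{\psi} = P\ket{\psi}$, so
\[
\braket{\phi}{\psi} = \bra{\phi} P \ket{\psi} = \bra{\phi} P^\dagger \ket{\psi} = \langle P\phi \,|\, \psi \rangle,
\]
where in the last step I used that $P$ is self-adjoint. The point of this rewriting is that the overlap $\braket{\phi}{\psi}$ is really an overlap between $P\ket{\phi}$ and $\ket{\psi}$.

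Next I would apply the Cauchy--Schwarz inequality to this rewritten expression:
\[
|\braket{\phi}{\psi}|^2 = |\langle P\phi \,|\, \psi\rangle|^2 \le \|P\ket{\phi}\|^2 \cdot \|\ket{\psi}\|^2 = \|P\ket{\phi}\|^2,
\]
where I used that $\ket{\psi}$ is a unit vector. Finally, to match the right-hand side of the statement, I would invoke Lemma~\ref{Lemma:Projectors1} (or equivalently expand directly) to identify $\|P\ket{\phi}\|^2 = \bra{\phi}P^\dagger P \ket{\phi} = \bra{\phi}P\ket{\phi} = \mathrm{tr}(P\kb{\phi})$, which gives the claimed bound.

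There is no real obstacle here: the only step that requires any thought is noticing that $P\ket{\psi} = \ket{\psi}$ is exactly what lets one replace $\ket{\psi}$ by $P\ket{\psi}$ inside the inner product so that Cauchy--Schwarz produces $\|P\ket{\phi}\|^2$ rather than the trivial bound $\|\ket{\phi}\|^2 = 1$. The result can also be seen intuitively as saying that the overlap of $\ket{\phi}$ with any unit vector lying entirely in the range of $P$ is at most the norm of $\ket{\phi}$'s projection onto that range, which is the natural generalization of Lemma~\ref{Lemma:Projectors1} (where $\ket{\psi}$ was chosen to be the specific normalized projection $P\ket{\phi}/\|P\ket{\phi}\|$, achieving the bound with equality).
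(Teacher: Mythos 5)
Your proof is correct and slightly cleaner than the paper's. The paper decomposes $\ket{\phi} = \alpha\ket{\psi} + \beta\ket{\psi^\bot}$ with $\braket{\psi}{\psi^\bot}=0$, observes that $\triple{\psi}{P}{\psi^\bot}=0$, expands $\|P\ket{\phi}\|^2 = |\alpha|^2 + |\beta|^2\|P\ket{\psi^\bot}\|^2$, and drops the nonnegative second term. You instead insert $P$ into the overlap via $\ket{\psi}=P\ket{\psi}$ and $P=P^\dagger$, and invoke Cauchy--Schwarz on $\langle P\phi\,|\,\psi\rangle$. Both arguments hinge on the same key observation --- that $\ket{\psi}$ lying in the range of $P$ lets one replace $\braket{\phi}{\psi}$ by an overlap with $P\ket{\phi}$ --- but your version abstracts away the explicit orthogonal decomposition into a single application of Cauchy--Schwarz, which is a bit shorter and makes the role of the hypothesis $P\ket{\psi}=\ket{\psi}$ more transparent. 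The paper's coordinate computation has the modest advantage of also exhibiting exactly when equality holds (when $P\ket{\psi^\bot}=0$, i.e.\ when $\ket{\phi}$'s component orthogonal to $\ket{\psi}$ is killed by $P$), though that is not used.
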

\begin{proof}
	We decompose $\ket{\phi}$ in order to make $\ket{\psi}$ appear. We write $\ket{\phi} = \alpha \ket{\psi} + \beta \ket{\psi^\bot}$ with $|\alpha|^2 + |\beta|^2 = 1$ and $\braket{\psi}{\psi^\bot} = 0$. This gives us $P \ket{\phi} = \alpha \ket{\psi} + \beta P \ket{\psi^\bot}$. Notice that we also have $\triple{\psi}{P}{\psi^\bot} = 0$. From there, we conclude
	$$ tr(P \kb{\phi}) = ||P \ket{\phi}||^2 = |\alpha|^2 + |\beta|^2 ||P \ket{\psi^\bot}||^2 \ge |\alpha|^2 = |\braket{\phi}{\psi}|^2.$$
\end{proof}
\subsection{Single outcome case : S = 1}
We first prove the theorem for the case where $S=1$. 

\begin{theorem}\label{Theorem:Single}
	Consider $n$ projectors $P_1,\dots,P_n$ and a quantum mixed state ${\sigma}$ in some Hilbert space $\spa{B}$. Let $V := \frac{1}{n} \sum_{i=1}^n tr(P_i \sigma)$, and let 
	$$E := \frac{1}{n(n-1)} \sum_{i,j\neq i} tr(P_j P_i  \sigma P_i P_j).$$ 
	Then it holds that $E \ge \frac{1}{64}\big(V - \frac{1}{n}\big)^3$.
\end{theorem}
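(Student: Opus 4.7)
The plan is to reduce to pure states and then combine Lemma~\ref{Lemma:Projectors2} with two Cauchy--Schwarz steps and the operator variance inequality. First I would purify $\sigma$ to a pure state $|\phi\rangle$ on a larger Hilbert space, replacing each $P_i$ by $P_i\otimes\I$; since this leaves both $V$ and $E$ unchanged, I may assume $\sigma=|\phi\rangle\langle\phi|$ is pure. Introducing the sub-normalised vectors $u_i:=P_i|\phi\rangle$ with squared norms $p_i:=tr(P_i\sigma)$ (so that $\sum_i p_i=nV$), I would work with the Gram matrix $M_{ij}:=\langle u_i|u_j\rangle=\langle\phi|P_iP_j|\phi\rangle$. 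This $M$ is Hermitian and positive semidefinite, has diagonal $(p_i)$, and its full sum of entries satisfies $\sum_{i,j}M_{ij}=\|W|\phi\rangle\|^2=\langle\phi|W^2|\phi\rangle$ for $W:=\sum_i P_i$.

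The first ingredient is a per-term lower bound: since $P_j u_j=u_j$, the unit vector $u_j/\sqrt{p_j}$ lies in the range of $P_j$, and Lemma~\ref{Lemma:Projectors2} gives $\|P_j u_i\|^2\geq |M_{ij}|^2/p_j$, so that $E\geq \frac{1}{n(n-1)}\sum_{i\neq j}|M_{ij}|^2/p_j$. The second ingredient is two applications of Cauchy--Schwarz: for each fixed $i$, pairing $\sqrt{p_j}$ with $|M_{ij}|/\sqrt{p_j}$ and using $\sum_{j\neq i}p_j\leq nV$ yields $\sum_{j\neq i}|M_{ij}|^2/p_j \geq (\sum_{j\neq i}|M_{ij}|)^2/(nV)$; then summing over $i$ and applying $\sum_i x_i^2\geq (\sum_i x_i)^2/n$ produces $\sum_{i\neq j}|M_{ij}|^2/p_j\geq (\sum_{i\neq j}|M_{ij}|)^2/(n^2V)$. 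The third ingredient controls the off-diagonal mass: because $M$ is Hermitian, $\sum_{i\neq j}M_{ij}=\langle\phi|W^2|\phi\rangle-nV$ is real, and by the variance inequality $\langle W^2\rangle\geq \langle W\rangle^2=(nV)^2$ it is at least $nV(nV-1)$, which is non-negative precisely when $V\geq 1/n$. The triangle inequality then gives $\sum_{i\neq j}|M_{ij}|\geq nV(nV-1)$.

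Combining the three ingredients yields
\[
E\;\geq\;\frac{V(nV-1)^2}{n(n-1)}\;=\;\frac{nV(V-1/n)^2}{n-1}\;\geq\;V(V-1/n)^2\;\geq\;(V-1/n)^3,
\]
using $V\geq V-1/n$ in the last step; since $(V-1/n)^3\geq (V-1/n)^3/64$, the theorem follows. (For $V<1/n$ the right-hand side is non-positive and the bound is trivial.) The only delicate point is the third ingredient: the triangle inequality on $\sum_{i\neq j}M_{ij}$ can only produce a useful lower bound once we know this real number is non-negative, and this is exactly where the hypothesis $V\geq 1/n$ enters via the variance inequality. I expect the same programme, applied to the `super-indices' $(i,s)$ in the decomposition $P_i=\sum_s P_i^s$ (and exploiting the orthogonality relations $P_i^sP_i^{s'}=\delta_{s,s'}P_i^s$ to kill the would-be $i=j$, $s\neq s'$ diagonal), to yield the general-$S$ statement of Theorem~\ref{Theorem:Multi} with an extra factor of order $1/S$.
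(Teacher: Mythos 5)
Your proof is correct, and it follows a genuinely different route from the paper's. Both arguments begin at the same place: after purifying $\sigma$ to $|\phi\rangle$, setting $u_i := P_i|\phi\rangle$, $p_i := \|u_i\|^2$, $M_{ij} := \langle u_i|u_j\rangle$, one obtains via Lemma~\ref{Lemma:Projectors2} that $E \geq \frac{1}{n(n-1)}\sum_{i\neq j}|M_{ij}|^2/p_j$ (this is exactly the paper's reduction, since $|\braket{\phi}{\phi_i}|^2 |\braket{\phi_i}{\phi_j}|^2 = p_i \cdot |M_{ij}|^2/(p_i p_j)$). The methods then diverge. The paper normalizes to pure states $\ket{\phi_i}$, proves a spectral Gram-matrix bound (Proposition~\ref{Proposition:AlmostOrthogonal}), and deals with the awkward $p_i$-weighting by a threshold parameter $\kappa$ (Proposition~\ref{Proposition:MainProposition}), followed by a two-case optimization of $\kappa$. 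You instead stay with the unnormalized quantities, apply Cauchy--Schwarz twice with weights $p_j$ to remove the denominators, and lower-bound the total off-diagonal mass $\sum_{i\neq j}|M_{ij}|$ via the operator variance $\langle W^2\rangle\geq\langle W\rangle^2$ with $W=\sum_iP_i$. This is cleaner and in fact yields the strictly stronger inequality
\[
E \;\geq\; \frac{n}{n-1}\,V\Big(V-\tfrac{1}{n}\Big)^2 \;\geq\; \Big(V-\tfrac{1}{n}\Big)^3,
\]
improving the paper's constant $1/64$ to $1$ and removing the extra optimization. One small point you should make explicit: when $p_j=0$, the term $|M_{ij}|^2/p_j$ is $0/0$; since $u_j=0$ forces $M_{ij}=0$, the term should be read as $0$, and the two Cauchy--Schwarz steps then go through restricting to indices with $p_j>0$ (or by a perturbation argument). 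Your sketch for the general-$S$ case over super-indices $(i,s)$ is plausible — orthogonality $P_i^sP_i^{s'}=\delta_{ss'}P_i^s$ kills the $i=j$, $s\neq s'$ cross terms, $W=\sum_{i,s}P_i^s=\sum_iP_i$ is unchanged, and the second Cauchy--Schwarz over $nS$ super-indices yields the $1/S$ factor — but as stated it is only a sketch and would need to be carried out to replace the paper's reduction via Proposition~\ref{Proposition:MultipleProj}.
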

\begin{proof}

We fix a quantum mixed state $\sigma$ in some Hilbert space $\spa{B}$ and $n$ projectors $P_1,\dots,P_n$ acting on $\spa{B}$. We first move to the realm of pure states which will be easier to analyze by adding an extra Hilbert space $\spa{E}$. We consider a purification $\ket{\phi}$ of $\sigma$ in some space $\spa{BE}=\spa{B}\otimes \spa{E}$. We define 
\begin{align*}
	\ket{\phi_i} := \frac{(P_i \otimes \mathbbm{1}_{\mathcal{E}}) \ket{\phi}}{||(P_i \otimes \mathbbm{1}_{\mathcal{E}}) \ket{\phi}||}.
\end{align*}
The state $\ket{\phi_i}$ corresponds to the normalized projection of $\ket{\phi}$ using $P_i$. We first express $E$ and $V$ as inner products of the quantum pure states we defined: 
\begin{lemma} 
	$ E \ge \frac{1}{n(n-1)} \sum_{i,j \neq i} |\braket{\phi}{\phi_i}|^2 |\braket{\phi_i}{\phi_j}|^2 \quad \mbox{and} \quad V = \frac{1}{n} \sum_{i=1}^n |\braket{\phi}{\phi_i}|^2.$
\end{lemma}
\begin{proof}
	We write 
	\begin{align*}
		E & = \frac{1}{n(n-1)} \sum_{i,j\neq i} tr(P_j P_i \sigma P_i P_j ) \\
		& = \frac{1}{n(n-1)} \sum_{i,j\neq i} tr\left( (P_j \otimes \mathbbm{1}_{\mathcal{E}})(P_i \otimes \mathbbm{1}_{\mathcal{E}})  \kb{\phi} (P_i \otimes \mathbbm{1}_{\mathcal{E}}) (P_j \otimes \mathbbm{1}_{\mathcal{E}}) \right) 
	\end{align*}
	Here, by using Lemma \ref{Lemma:Projectors1}, notice that 
	$$ (P_i \otimes \mathbbm{1}_{\mathcal{E}})\kb{\phi}(P_i \otimes \mathbbm{1}_{\mathcal{E}}) = ||(P_i \otimes \mathbbm{1}_{\mathcal{E}})\ket{\phi}||^2 \kb{\phi_i} = |\braket{\phi}{\phi_i}|^2 \kb{\phi_i}.$$
	From there, we can continue have
	\begin{align*}
		E & = \frac{1}{n(n-1)} \sum_{i,j\neq i}  |\braket{\phi}{\phi_i}|^2 tr\left( (P_j \otimes \mathbbm{1}_{\mathcal{E}}) \kb{\phi_i}  \right) \\
		& \ge \frac{1}{n(n-1)} \sum_{i,j \neq i} |\braket{\phi}{\phi_i}|^2 |\braket{\phi_i}{\phi_j}|^2 
	\end{align*}
	where the last inequality comes from Lemma \ref{Lemma:Projectors2}. 
	Notice also that we immediately have $V = \frac{1}{n} \sum_{i=1}^n tr(P_i \sigma) = \sum_{i} |\braket{\phi}{\phi_i}|^2$. 
\end{proof}
Our goal is to relate $E$ and $V$. We will deal with the terms $|\braket{\phi_i}{\phi_j}|^2$ using the following proposition on almost orthogonal states.
\begin{proposition}\label{Proposition:AlmostOrthogonal}
	Consider $n$ quantum pure states $\ket{\phi_1},\dots,\ket{\phi_n}$. Let 
	$$ S := \max_{\ket{\Omega}} \sum_{i=1}^n |\braket{\Omega}{\phi_i}|^2 \quad \text{and} \quad C := \sum_{i,j \neq i}^n |\braket{\phi_i}{\phi_j}|^2.$$ 
	We have $S \le 1 + \sqrt{\frac{(n-1)C}{n}} \le 1 + \sqrt{C}$.
\end{proposition}
\begin{proof}
	Let $M = \sum_{i=1}^n \kb{\phi_i}$. $M$ is a positive semi-definite matrix of dimension at most $n$. Let $\lambda_1 \ge \lambda_2 \ge \dots \ge \lambda_n$ the $n$ eigenvalues of $M$ in decreasing order. We have $\sum_i \lambda_i = tr(M) = \sum_j tr(\kb{\phi_j}) = n$. Moreover, notice that $S = \max_{\ket{\Omega}} \sum_i |\braket{\Omega}{\phi_i}|^2 = \lambda_1$.
	
	We write $M^2 = \sum_{i,j} \braket{\phi_i}{\phi_j} \ketbra{\phi_i}{\phi_j}$ and $tr(M^2) = \sum_{i,j} |\braket{\phi_i}{\phi_j}|^2 = n + C$. Moreover, we have $tr(M^2) = \sum_i \lambda_i^2$. This gives us
	\begin{align*}
		n + C & = tr(M^2) = \sum_{i=1}^{n} \lambda_i^2 = \lambda_1^2 + \sum_{i=2}^{n} \lambda_i^2 
		\ge \lambda_1^2 + (n-1) \left(\frac{n - \lambda_1}{n-1}\right)^2 \\
		&= \lambda_1^2 + \frac{(n-\lambda_1)^2}{n-1} = S^2 + \frac{(n-S)^2}{n-1}
	\end{align*}
	where the inequality comes from the convexity of the square function. 
	From there, we have 
	\begin{align*}
		(n-1)S^2 + (n-S)^2 - n(n-1) \le (n-1)C 
	\end{align*}
	Using $(n-1)S^2 + (n-S)^2 - n(n-1) = n(S-1)^2$, we conclude that $n(S-1)^2 \le (n-1)C$ or equivalently
	$S \le 1 + \sqrt{\frac{(n-1)C}{n}}$.
\end{proof}
In particular, the above proposition implies that 
$$V \le \frac{1}{n} + \frac{n-1}{n} \sqrt{\frac{1}{n(n-1)}\sum_{i,j\neq i} |\braket{\phi_i}{\phi_j}|^2}.$$
The term in the squared root is very similar to $E$. Unfortunately, the expression for $E$ contains an extra factor $|\braket{\phi}{\phi_i}|^2$ in the sum under the square-root. If the quantity $|\braket{\phi}{\phi_i}|^2$ was independent of $i$, it would be equal to $V$ and we would be able to conclude. However, this is not always the case and this adds a difficulty in the proof. In order to overcome it, we will use Proposition \ref{Proposition:AlmostOrthogonal} only with the states for which $|\braket{\phi}{\phi_i}|^2$ is not too small. We will choose a threshold $\kappa$ (that will be fixed later) and consider only the indices $i$ for which $|\braket{\phi}{\phi_i}|^2 \ge V/\kappa$. This is the goal of the next proposition.
\begin{proposition}\label{Proposition:MainProposition}
	$ \forall \kappa > 1, \ V \le \left(1 + \frac{1}{\kappa -1}\right) \left(\frac{1}{n} + \sqrt{\frac{\kappa E}{V}}\right). $
\end{proposition}
\begin{proof}
	For all $i$, let $p_i := |\braket{\phi}{\phi_i}|^2$. We have by definition $V = \sum_i p_i$. We fix $\kappa > 1$ and define the set $Z := \{i \in [n] : p_i \ge \frac{V}{\kappa}\}$. We have 
	$$ \frac{1}{n} \sum_{i \notin Z} p_i \le \frac{1}{n} \sum_{i \notin Z} \frac{V}{\kappa}  \le \frac{V}{\kappa}, $$
	which implies 
	\begin{align}\label{Eq10}
		\frac{1}{n} \sum_{i \in Z} p_i \ge (1 - \frac{1}{\kappa})V.
	\end{align}
	We write 
	\begin{align}\label{Eq2}
		E & \ge \frac{1}{n(n-1)} \sum_{i,j \neq i} p_i |\braket{\phi_i}{\phi_j}|^2 \ge\frac{1}{n(n-1)} \sum_{\substack{i,j \in Z\\ i \ne j}} p_i |\braket{\phi_{i}}{\phi_j}|^2 \\
		&\ge \frac{V}{\kappa} \cdot \frac{1}{n(n-1)} \sum_{\substack{i,j \in Z\\ i \ne j}}  |\braket{\phi_{i}}{\phi_j}|^2.
	\end{align}
	Now, starting from Equation \ref{Eq10}, we have 
	\begin{align}
		V & \leq \frac{1}{1 - \frac{1}{\kappa}} \frac{1}{n} \sum_{i \in Z} p_i = \frac{1}{1 - \frac{1}{\kappa}} \frac{1}{n} \sum_{i \in Z} |\braket{\phi}{\phi_i}|^2 \le \Big(1 + \frac{1}{\kappa - 1}\Big) \max_{\ket{\Omega}} \frac{1}{n} \sum_{i \in Z} |\braket{\Omega}{\phi_i}|^2 \nonumber \\ 
		&\le \Big(1 + \frac{1}{\kappa - 1}\Big)  \left(\frac{1}{n} + \sqrt{\frac{1}{n(n-1)} \sum_{\substack{i,j \in Z\\ i \ne j}}  |\braket{\phi_{i}}{\phi_j}|^2 }\right)  \label{eq1} \\%& \mbox{from Lemma } \ref{Proposition:AlmostOrthogonal} \\
		&\le \Big(1 + \frac{1}{\kappa - 1}\Big)  \left(\frac{1}{n} + \sqrt\frac{\kappa E}{V}\right)%& \mbox{from Equation } \ref{Eq2}
	\end{align} 
where we used Lemma \ref{Proposition:AlmostOrthogonal} in Equation \ref{eq1} and Equation \ref{Eq2} for the last inequality. This proves the proposition.
\end{proof}
We can now use Proposition \ref{Proposition:MainProposition} to prove our theorem. We distinguish two cases: 
\begin{enumerate}
	\item If $(\frac{V}{n^2E})^{1/3} > 2$. We take $\kappa = (\frac{V}{n^2E})^{1/3} > 2$ which implies $\kappa (\frac{n^2 E}{V})^{1/3} = 1$ and $(\kappa \frac{n^2 E}{V})^{\frac{1}{2}} = \frac{1}{\kappa} $. We get 
	\begin{align*}
		V & \le \frac{1}{n} \Big(1 + \frac{1}{\kappa - 1}\Big) \left( 1 +  \Big[{\kappa\frac{n^2 E}{V}} \Big]^{1/2} \right) = \frac{1}{n} \Big(1 + \frac{1}{\kappa - 1}\Big)\Big(1 + \frac{1}{\kappa }\Big) \\ & \le \frac{1}{n} (1 + \frac{4}{\kappa}) = \frac{1}{n} + 4\left(\frac{E}{nV}\right)^{1/3}.
	\end{align*} 
	This gives $E \ge \frac{nV}{64}(V - \frac{1}{n})^3$ which implies $E \ge \frac{1}{64}(V - \frac{1}{n})^3$. To see this last implication, consider the following two cases: if $V \ge \frac{1}{n}$ then the equality comes immediately from the previous inequality. If $V \le \frac{1}{n}$, we immediately have $E \ge 0 \ge \frac{1}{64}(V - \frac{1}{n})^3$.
	\item If $(\frac{V}{n^2E})^{1/3} \le 2$. This implies $\left(\frac{V}{E}\right)^{1/2} \le n \cdot 2^{3/2}$. We take $\kappa = 2$ and obtain 
	\begin{align*}
		V &\le \Big(1 + \frac{1}{\kappa - 1}\Big)  \left(\frac{1}{n} + \sqrt\frac{\kappa E}{V}\right) 
		= 2 (\frac{1}{n} + \sqrt{\frac{2E}{V}})\\
		& \le 2(2^{2/3}\sqrt{\frac{E}{V}} + \sqrt{\frac{2E}{V}}) \le 6\sqrt{\frac{E}{V}}
	\end{align*}
	which implies $E \ge \frac{V^3}{36} \ge \frac{1}{64}(V - \frac{1}{n})^3$.
\end{enumerate}
\end{proof}
\subsection{General case}
We can now show our theorem for any $S$.
The general case will be a direct corollary of the following.
\begin{proposition}\label{Proposition:MultipleProj}
	Let a projector $P := \sum_{i = 1}^m P_i$ where $\{P_i\}_{i \in [m]}$ are orthogonal projectors. For any pure state $\ket{\psi}$, we have
	$$\sum_{i = 1}^m P_i   \kb{\psi}   P_i \ge \frac{1}{m} P   \kb{\psi} P.
	$$
\end{proposition}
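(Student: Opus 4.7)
The plan is to reduce the claimed operator inequality to a scalar Cauchy--Schwarz inequality after a convenient change of notation. I would begin by setting $\ket{v_i} := P_i\ket{\psi}$ for $i=1,\dots,m$, and $\ket{w} := P\ket{\psi} = \sum_{i=1}^m P_i\ket{\psi} = \sum_i \ket{v_i}$. With this notation, the two sides of the inequality become
$$\sum_{i=1}^m P_i \kb{\psi} P_i = \sum_{i=1}^m \ketbra{v_i}{v_i}, \qquad \frac{1}{m} P\kb{\psi}P = \frac{1}{m}\ketbra{w}{w},$$
so the claim is equivalent to showing that the Hermitian operator $M := \sum_i \ketbra{v_i}{v_i} - \frac{1}{m}\ketbra{w}{w}$ is positive semidefinite.

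Next, I would test $M$ against an arbitrary vector $\ket{\phi}$ and compute
$$\bra{\phi}M\ket{\phi} = \sum_{i=1}^m |\braket{\phi}{v_i}|^2 - \frac{1}{m}\Big|\sum_{i=1}^m \braket{\phi}{v_i}\Big|^2.$$
Writing $\alpha_i := \braket{\phi}{v_i}$, positivity of $\bra{\phi}M\ket{\phi}$ becomes the elementary inequality $\sum_i |\alpha_i|^2 \geq \frac{1}{m}|\sum_i \alpha_i|^2$, which is exactly Cauchy--Schwarz applied to the vectors $(\alpha_1,\dots,\alpha_m)$ and $(1,\dots,1)$ in $\mathbb{C}^m$. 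Since this holds for every $\ket{\phi}$, we conclude $M \ge 0$, which is the desired inequality.

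I do not foresee a real obstacle: the argument is a one-line Cauchy--Schwarz once the right substitution is made. The one subtlety worth mentioning is that the orthogonality of the projectors $\{P_i\}$ is not actually needed for this particular step (it will be used elsewhere, e.g.\ to split $P$ into a sum of rank-reducing pieces in the proof of Theorem~\ref{Theorem:Multi}); here the vectors $\ket{v_i}$ happen to be mutually orthogonal, but the proof works verbatim for arbitrary vectors. It is also worth noting that the constant $\frac{1}{m}$ is tight, as can be seen by choosing $\ket{\psi}$ to be an equal-weight superposition of one unit vector from each range of $P_i$.
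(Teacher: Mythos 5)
Your proof is correct and rests on the same central ingredient as the paper's — evaluate the candidate PSD operator on an arbitrary test vector and invoke Cauchy--Schwarz — but your execution is a bit cleaner and slightly more general. The paper first decomposes $\ket{\phi} = \sum_i \alpha_i \ket{\psi^P_i} + \ket{\xi}$ with $\ket{\xi}$ orthogonal to all the $\ket{\psi^P_i}$, and then uses the mutual orthogonality $\braket{\psi^P_i}{\psi^P_j}=0$ (which follows from $P_iP_j=\delta_{ij}P_i$) to compute $\braket{\psi^P_i}{\phi}=\alpha_i\braket{\psi^P_i}{\psi^P_i}$, after which it applies Cauchy--Schwarz to the numbers $|\alpha_i|\,\|\psi^P_i\|^2$. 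You instead set $\alpha_i := \braket{\phi}{v_i}$ directly, so the quadratic form of each side is immediate and the inequality reduces to $\sum_i|\alpha_i|^2 \ge \frac{1}{m}\big|\sum_i\alpha_i\big|^2$, which is Cauchy--Schwarz against $(1,\dots,1)$. Your observation that orthogonality of the $P_i$ is never used in this particular lemma is accurate and worth keeping: the statement $\sum_i\kb{v_i}\ge\frac{1}{m}\kb{\sum_iv_i}$ holds for arbitrary vectors, and the hypothesis on the $P_i$ is only exploited later when the lemma is plugged into the proof of Theorem~\ref{Theorem:Multi}. Your tightness remark (equal-weight superposition) matches the extremal example used elsewhere in the paper.
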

We note that this result can be obtained as an application of the pinching inequality \cite{Hay02,SBT16}, but we provide a proof here for completeness.
\begin{proof}
	We define the following \emph{unnormalized states} $\ket{\psi^P} = P(\ket{\psi})$ and $\ket{\psi^P_i} = P_i(\ket{\psi})$. Because $P = \sum_i P_i$, we have $\ket{\psi^P} = \sum_i \ket{\psi^P_i}$. This gives 
	\begin{align*}
		\sum_{i = 1}^m P_i  \kb{\psi}  P_i  = \sum_{i = 1}^m \kb{\psi^P_i} \\
		P  \kb{\psi}  P^{\dagger} = \kb{\psi^P}
	\end{align*}Consider now any state $\ket{\phi} = \sum_i \alpha_i \ket{\psi^P_i} + \ket{\xi}$ where $\ket{\xi}$ 
	is orthogonal to all the $\ket{\psi^P_i}$. We have
	\begin{align*}
		\triple{\phi}{\sum_{i = 1}^m P_i  \kb{\psi}  P_i}{\phi} = \sum_i |\braket{\psi^P_i}{\phi}|^2 = |\alpha_i|^2\left|\braket{\psi^P_i}{\psi^P_i}\right|^2
	\end{align*}
	and 
	\begin{align*}
		\triple{\phi}{  P \kb{\psi} P  }{\phi} = |\braket{\psi^P}{\phi}|^2 = \left|\sum_i \alpha_i \braket{\psi^P_i}{\psi^P_i}\right|^2
	\end{align*}
	From there, we can conclude. We have:
	\begin{align*}
		\triple{\phi}{\sum_{i = 1}^m P_i \kb{\psi}  P_i}{\phi} & = \sum_i |\alpha_i|^2|\braket{\psi^P_i}{\psi^P_i}|^2 \\
		& \ge \frac{1}{m} \left|\sum_i |\alpha_i||\braket{\psi^P_i}{\psi^P_i}|\right|^2 & \text{(from  Cauchy-Schwarz)} \\
		& \ge \frac{1}{m} \triple{\phi}{ P \kb{\psi} P }{\phi} 
	\end{align*}
	Since this holds for any state $\ket{\phi}$, we can conclude that 
	$$\sum_{i = 1}^m P_i  \kb{\psi}   P_i \ge \frac{1}{m} P  \kb{\psi}  P^{\dagger}.
	$$
\end{proof}
From there, and using the previous theorem, we can show our main technical result.
\begin{reptheorem}{Theorem:Multi}
	Consider $n$ projectors $P_1,\dots,P_n$ such that for each $i$, we can write $P_i := \sum_{s = 1}^S P_i^s$ where the $\{P_i^s\}_s$ are orthogonal projectors for each $i$, \textit{i.e.} for each $i$ and $s,s'$, we have $P^s_i P^{s'}_i = \delta_{s,s'} P^s_i$. Let $\sigma$ be any quantum state, let $V := \frac{1}{n} \sum_{i=1}^n tr(P_i  \sigma)$, and let \\ $E := \frac{1}{n(n-1)} \sum_{i,j\neq i} \sum_{s,s' = 1}^S tr(P^{s'}_j P^s_i \sigma  P^s_i P^{s'}_j)$. Then it holds that $E \ge \frac{1}{64S}\big(V - \frac{1}{n}\big)^3$.
\end{reptheorem}
\begin{proof}
	We fix $n$ projectors $P_1,\dots,P_n$ such that for each $i$, we can write $P_i := \sum_{s = 1}^S P_i^s$ where the $\{P_i^s\}_s$ are orthogonal projectors for each $i$. We fix a quantum state $\sigma$. We have 
	\begin{align*}
		E & = \frac{1}{n(n-1)} \sum_{i,j\neq i} \sum_{s,s' = 1}^S tr(P^{s'}_j P^s_i \sigma (P^{s'}_i) (P^s_j) ) \\
		& = \frac{1}{n(n-1)} \sum_{i,j\neq i} \sum_{s=1}^S tr(P_j P^s_i \sigma (P^{s'}_i) P_j) \\
		& \ge \frac{1}{Sn(n-1)} \sum_{i,j\neq i}  tr(P_j P_i \sigma (P_i) P_j) & \textrm{(from Proposition } \ref{Proposition:MultipleProj}) \\
		& \ge \frac{1}{64S} \Big(V - \frac{1}{n}\Big)^3 & \mbox{(from Theorem } \ref{Theorem:Single} )
	\end{align*}
\end{proof}

\section{Entangled games}\label{Section:EntangledGames}
The goal of this section is to use the consecutive measurement theorems of the previous section to establish upper bounds on the value of entangled  games. For a game $G$ on the uniform distribution, we will define a game $G_{coup}$ which corresponds to a couple of instances of $G$ where Alice plays twice with the same input and Bob receives two distinct inputs and they need to win both instances in order to win the game  $G_{coup}$. In the cases we consider, upper bounding $G_{coup}$ will be easily done from non-signaling. Our learning lemmata will allow us to relate the winning probabilities of $G$ and $G_{coup}$. These two steps together will give us bounds on the value of $G$.
\subsection{First definitions}
\begin{definition}
A game $G = (I_A,I_B,O_A,O_B,V,p)$ is defined by 
\begin{itemize}
\item 2 input sets $I_A,I_B$ which are respectively Alice's and Bob's input sets.
\item 2 output sets sets $O_A,O_B$ which are respectively Alice's and Bob's output sets.
\item A valuation function $V : I_A \times I_B \times O_A \times O_B \rightarrow \zo$ which indicates whether the game is won for some fixed input and outputs. The game is won if the value of $V$ is $1$.
\item A probability function $p : I_A \times I_B \rightarrow [0,1]$ which corresponds to the input distribution. We have $\sum_{(x,y) \in I_A \times I_B} p_{xy} = 1$.
\end{itemize}
\end{definition}
\begin{definition}
	A game $G = (I_A,I_B,O_A,O_B,V,p)$ is said to be \emph{on the uniform distribution} if $\forall (x,y) \in I_A \times I_B, \ p_{xy} = \frac{1}{|I_A||I_B|}$.
\end{definition}
\begin{definition}
	A game $G = (I_A,I_B,O_A,O_B,V,p)$ is \emph{projective} if
	$$\forall (x,y) \in  I_A \times I_B \ st. \ p_{xy} \neq 0, \ \forall a \in O_A, \ \exists! \ b \in O_B, \ st. \ V(x,y,a,b) = 1.$$
	A game $G$ is $S$-\emph{projective} if
	$$\forall (x,y) \in  I_A \times I_B \ st. \ p_{xy} \neq 0, \forall a \in O_A, |\{b \in O_B : V(x,y,a,b) = 1\}| \le S.$$
	In particular, a projective game is $1$-projective.
\end{definition}
In the case where Alice and Bob are classical and want to win a game $G$, it is known that their optimal strategy to win is to perform a deterministic strategy. Notice that a projective game is asymmetric in Alice and Bob.
\begin{definition}
	For a game $G = (I_A,I_B,O_A,O_B,V,p)$, we denote by $\omega^*(G)$ its entangled value, \textit{i.e.} the maximum winning probability for the game when Alice and Bob are quantum and share an entangled state. 
\end{definition}
In order to study this maximal winning probability, it is enough to consider the case where Alice and Bob perform projective measurements.

In order to prove upper bounds on $\omega^*(G)$ for a game $G$ on the uniform distribution, we introduce the notion of coupled game $G_{coup}$. 
\begin{definition}
	For any game $G = (I_A,I_B,O_A,O_B,V,p)$ on the uniform distribution we define $G_{coup}$ as follows:
	\begin{itemize}
		\item Alice receives a random $x \in_R I_A$. Bob receives a random pair of \emph{different} inputs $(y,y')$ from $I_B$. 
		\item Alice outputs $a \in O_A$. Bob outputs $b,b' \in O_B$.
		\item They win the game if $V(x,y,a,b) = V(x,y',a,b') = 1$.
	\end{itemize}
\end{definition}

\subsection{Relating $G$ and $G_{coup}$}
In this section, we use our results from the previous section to relate the values of $G$ and $G_{coup}$. 
\begin{repproposition}{Proposition:Quantum_coup_bound}
	For any game $G$ on the uniform distribution which is $S$-projective, we have $\omega^*(G_{coup}) \ge \frac{1}{S \cdot 64} (\omega^*(G) - \frac{1}{n})^3$ where $n = |I_B|$.
\end{repproposition}
\begin{proof}
	Consider an optimal strategy for Alice and Bob for the game $G$. In particular, for each $y$, let $Q^y = \{Q^y_b\}$ the projective measurement that corresponds to his strategy for input $y$. Fix an input/output pair $(x,a)$ for Alice and let $\sigma^{xa}$ be the state held by Bob, conditioned on this pair. For each $y$, let $W_y = \{b : V(a,b|x,y) = 1\}$ be the set of winning outputs for Bob. Since $G$ is $S$-projective, we have $|W_y| \le S$. We define $Q^y_W = \sum_{b \in W_y} Q^y_b$.
	
	We denote by $V^{xa}$ the probability that Alice and Bob win the game for a fixed $x,a$. Notice that $\omega^*(G) = \E_{xa}[V^{xa}]$. We have 
	$$ V^{xa} =  \frac{1}{n}\sum_y tr(Q^y_W \sigma^{xa} (Q^y_W)), $$
since $y$ is uniformly distributed over the set $I_B$ of size $n$.
	
	We now consider the following quantum strategy for $G_{coup}$: Alice and Bob share the same initial state as in the optimal strategy for $G$; Alice performs the same measurement strategy as for $G$; on inputs $y,y$, Bob applies the first measurement $Q^y$ and obtains outcome $b$, then applies the measurement $P^{y'}$ on his resulting state and gets outcome $b'$. Bob outputs $(b,b')$. Let $E^{xa}$ be the probability that Alice and Bob win $G_{coup}$ using this strategy for a fixed $x,a$. Notice that $\omega^*(G_{coup}) \ge \E_{xa}[E^{xa}]$ since the value $\E_{xa}[E^{xa}]$ is achievable. We have
	\begin{align*}
		E^{xa} & = \frac{1}{n(n-1)} \sum_{y,y' \neq y} 
		\sum_{\substack{b : V(ab|xy) =1  \\ b' : V(ab'|xy') = 1}}
		tr(Q^{y'}_{b'}Q^y_b \sigma^{xa} Q^y_b Q^{y'}_{b'}) \\
		& \ge Pos(\frac{1}{64S}(V^{xa} - \frac{1}{n})^3) & \mbox{from Theorem } \ref{Theorem:Multi}
	\end{align*}
where $Pos(x) := \max(x,0)$ is the positive part of $x$.
	By taking the expectation on each side, we obtain	 
	\begin{align*}
	 \omega^*(G_{coup}) &= \E_{xa}[E^{xa}] \ge \E_{xa}[Pos(\frac{1}{64S}(V^{xa} - \frac{1}{n})^3)] \ge Pos(\frac{1}{64S}(\omega^*(G) - \frac{1}{n})^3) \\
	&\ge \frac{1}{64S}(\omega^*(G) - \frac{1}{n})^3
	\end{align*}
	where we used the convexity of the function $x \mapsto Pos(x^3)$.
\end{proof}
\subsection{Retrieving the value of certain entangled games}\label{subSection:EntangledGames}
We now use the technique developed above in order to obtain upper bounds on games based on the $\FQ$ variant of ${CHSH}$.

$\mathbf{CHSH^Q(P)}$ --- 
We consider the nonlocal game called ${CHSH^Q(P)}$ with $P \leq Q$. Here, Alice and Bob receive inputs $x$ and $y$, where $x$ is a uniformly random element in $\F_Q$ and $y$ is an element of $\F_Q$ taken uniformly at random from $\{0,\dots,P-1\}$. They output values $a, b\in \FQ$ and win if $a + b = x * y$, where the addition and multiplication are with respect to $\FQ$. Notice that ${CHSH^Q(P)}$ is a projective game on the uniform distribution. 

Let's analyze ${CHSH^Q(P)_{coup}}$. Fix an input/output pair $(x,a)$ and a pair $(y,y')$ of inputs for Bob with $y \neq y'$.
Let $b,b'$ Bob's output. If Alice and Bob win the game then we have $a + b = x * y$ and $a + b' = x * y'$ which implies that $(b-b')*(y - y')^{-1} = x$. This means that Bob can use any strategy for ${CHSH^Q(P)_{coup}}$ as a strategy to guess $x$ with the same winning probability. Because of non-signaling, this happens with probability at most $\frac{1}{Q}$. We therefore have
$\omega^*(CHSH^Q(P)_{coup}) \le \frac{1}{Q}$. Using Proposition \ref{Proposition:Quantum_coup_bound} (we have $S = 1$ in this setting), we obtain 
$\omega^*(CHSH^Q(P)) \le \frac{1}{P} + \frac{4}{Q^{1/3}}$. \\ \\
$\mathbf{CHSH^Q(2)}^{\otimes n}$ --- This is the parallel repetition of $\CHSH^Q$ where Alice and Bob receive $n$ uniform strings $x_1, \cdots, x_n$ and $y_1, \cdots, y_n \in \zo$ and output strings $a_1, \cdots, a_n$ and $b_1, \cdots, b_n$, respectively. They win the $\CHSH^Q(2)^{\otimes n}$  game if they win all $n$ instances of the $\CHSH^Q$ games, \textit{i.e.} if $a_i + b_i = x_i * y_i$ for all $i \in \{1, \cdots, n\}$. Consider now the coupled version of this game. For any two inputs $y=y_1,\dots,y_n,y'=y'_1,\dots,y'_n$ given to Bob, if Alice and Bob win the game then similarly as in $\CHSH$, Bob can recover Alice's input bits $x_i$ for each $i$ where $y_i \neq y'_i$. From non signaling, this happens with probability at most $Q^{-|y-y'|_H}$, where $|y-y'|_H$ is the Hamming distance between strings $y$ and $y'$, counting in how many indices both strings differ. Therefore, we have 
$\omega^*(CHSH^Q(2)_{coup}^{\otimes n}) = \E_{y,y'\neq y}[Q^{-|y-y'|_H}] = \frac{1}{2^n}\left((1 + \frac{1}{Q})^n - 1\right)$. If $Q > n$, we have 
$$ \omega^*(CHSH^Q(2)_{coup}^{\otimes n}) \le \frac{2n}{Q 2^n} \quad \mbox{which gives} \quad
\omega(CHSH^Q(2)^{\otimes n}) \le \frac{1}{2^n} + 4(\frac{2n}{Q 2^n})^{1/3}.$$
In particular, if we take $Q = \frac{64 \cdot 2^{2n}}{2n\eps^3}$, we obtain $\omega(CHSH^Q(2)^{\otimes n}) \le \frac{1}{2^n} \left(1 + \eps\right)$.

\section{Relativistic bit and string commitment} \label{Section:RelativisticBC}
In this section, we will review the relativistic $\F_Q$ bit commitment scheme and its natural extension to string commitment. We will show how the sum-binding property (with worst parameters) is preserved when considering string commitment or the parallel repetition of bit commitment. This is showed by Propositions \ref{Proposition:P-Sum-binding} and \ref{Proposition:Parallel_Sum-binding}.

\subsection{Bit commitment}
\emph{Bit commitment} is a cryptographic primitive between two distrustful parties Alice and Bob which consists of $2$ phases: a \emph{Commit phase} and a \emph{Reveal phase}. Alice has a bit $d$ at the beginning of the protocol. In the commit phase, Alice will commit to this value $d$ by performing some communication protocol such that at end of the commit phase, Bob has no information about $d$ (hiding property). In the second phase, the reveal phase, Alice and Bob also perform some communication which results in Alice revealing $d$.  A desired property here is that Alice is unable to reveal a bit different from the one chosen during the commit phase (binding property). 

In some sense, a bit commitment protocol simulates a digital safe. In the commit phase, Alice writes her input $d$ on a piece of paper, puts that paper into the safe and sends the safe to Bob. If Bob doesn't hold the key of the safe then he cannot open it and therefore has no information about $d$. In the reveal phase, Alice would send to Bob the key to open the safe. But she cannot change the value of the bit in the safe because Bob has control of the safe. This primitive has been widely studied. However,  bit commitment can only be performed with computational security in the most usual models. 

We now define more formally a bit commitment scheme.

\begin{definition}
	A quantum commitment scheme  is an interactive protocol between Alice and Bob with two phases, a Commit phase and a Reveal phase.
	
	\begin{itemize}
		\item \emph{Commit phase}. Alice chooses a uniformly random input $d$ that she wants to commit to. To do so, Alice and Bob perform a communication protocol that corresponds to this commit phase. 
		\item \emph{Reveal phase}. Alice interacts with Bob in order to reveal $d$. To do so, they perform a second communication protocol where at the end, Bob should know the value revealed by Alice. Bob, depending on this revealed value and the interaction with Alice, outputs either ``Accept" or ``Reject".  
	\end{itemize}
\end{definition}

A commitment scheme $\Pi = (COMM,OPEN)$ is the description of the protocol followed by the honest parties during both the commit and the open phases. All protocols 
that we will consider will be perfectly hiding and we will only be interested in the binding property. Therefore, we only consider the case of a cheating Alice, 
which will be described through her cheating strategy  $\Str^* = (\Comm^*,\Open^*)$ in both phases of the protocol. The binding property we consider is the standard sum-property, that was also used 
in previous work regarding relativistic bit commitment \cite{LKB+15,FF15,CCL15}. 

\begin{definition}[Sum-binding]
	\label{def1}
	We say that a bit commitment protocol $\Pi$ is \emph{$\eps$-sum-binding} if
	$$ \forall \ \Comm^*, \ \sum_{d = 0}^{1} \max_{\Open^*} \left( \Pr[\mbox{Alice successfully reveals } d \mid (\Comm^*,\Open^*)]  \right) \le 1 + \eps.$$
\end{definition}

In the case of string commitment, meaning Alice wants to commit/reveal to a string of dimension $P$ (\ie $\lceil\log(P)\rceil bits$), we can extend the sum-binding property as follows.
\begin{definition}[String sum-binding]
	\label{def21}
	We say that a $P$-string commitment protocol $\Pi$ is \emph{$\eps$-sum-binding} if
	$$ \forall \ \Comm^*, \ \sum_{d = 0}^{P-1} \max_{\Open^*} \left( \Pr[\mbox{Alice successfully reveals } d \mid (\Comm^*,\Open^*)]  \right) \le 1 + \eps.$$
\end{definition}

The sum-binding property for bit commitment is a relatively weak one. Indeed, it is very hard to use this definition when combining it with other primitives. For example, when committing to $n$ bits in parallel, it is not always the case that this overall commitment, seen as a $2^n$-string commitment, satisfies a good string sum-binding property. On the other hand, the string sum-binding for strings seems more exploitable.

\subsection{Relativistic bit commitment}\label{Section:DescriptionOfTheProtocol}
A relativistic bit commitment scheme is a commitment scheme where we use physical property that no information carrier can travel faster than the speed of light. In order to take advantage of this principle, we split Alice (resp.~Bob) into $2$ agents $\mathcal{A}_1$ and $\mathcal{A}_2$ (respectively $\mathcal{B}_1$ and $\mathcal{B}_2$). For each $i \in \{1,2\}$, $\mathcal{A}_i$ interacts only with $\mathcal{B}_i$. If we put the two pairs $(\mathcal{A}_1, \mathcal{B}_1)$ and $(\mathcal{A}_2,\mathcal{B}_2)$ far apart, and use some timing constraints, we can enforce some non-signaling type scenarios. Here, we will only use the property that the two honest Bob's know their respective location. In particular, there is no trust needed regarding the location of the cheating parties. 

The security definitions for relativistic bit commitment are the ones we presented above: definitions \ref{def1} and \ref{def21}. We will now describe the $\F_Q$ relativistic bit commitment scheme. This scheme will consist of $4$ phases, the preparation phase, the commit phase, the sustain phase and the reveal phase. The preparation phase is some preprocessing phase that can be done anytime before the protocol. The sustain phase can be seen as a part of the reveal phase, and corresponds to the time where the committed bit is safe. We assume here that the two Alices learn at the beginning of the sustain phase the bit $d$ they should try to reveal (which doesn't necessarily correspond to the bit, if any, they committed to).

{\bf The single-round $\F_Q$ protocol}. --- The single-round version corresponds $\CHSHQ$ to the protocol introduced by Cr{\'e}peau \etal \cite{CSST11} (see also \cite{sim07}).
Both players, Alice and Bob, have agents $\mathcal{A}_1, \mathcal{A}_2$ and $\mathcal{B}_1, \mathcal{B}_2$ present at two spatial locations, 1 and 2, separated by a distance $D$. We consider the case where Alice makes the commitment. 
The protocol (followed by honest players) consists of 4 phases: preparation, commit, sustain and reveal. The sustain phase in the single-round protocol is trivial and simply consists in waiting for a time less than $D/c$, which is the time needed for light to travel between the two locations. The bit commitment protocol goes as follows. 
\begin{enumerate}
	\item \emph{Preparation phase}: $\mathcal{A}_1,\mathcal{A}_2$ (resp.~$\mathcal{B}_1,\mathcal{B}_2$) share a random number $a\in \FQ$ (resp.~$x \in \FQ$).
	\item \emph{Commit phase}: $\mathcal{B}_1$ sends $b$ to $\mathcal{A}_1$, who immediately returns $y = a + d *x$ where $d \in \zo$ is the committed bit. 
	\item \emph{Sustain phase}: $\mathcal{A}_1$ and $\mathcal{A}_2$ wait for some time $\tau < D/c$, where $c$ is the speed of light. Crucially, for any time less than $D/c$, the NSS principle guarantees that $\mathcal{A}_2$ has no information about the value of $b$.
	\item \emph{Reveal phase}: $\mathcal{A}_2$ reveals the values of $d$ and $a$ to $\mathcal{B}_2$ who checks that $y = a + d *t x$.
\end{enumerate}

This relativistic bit commitment protocol is known to be $O(\frac{1}{\sqrt{Q}})$-sum-binding \cite{LKB+15}.
It can be easily extended to a $P$-string commitment where $d$ is an element of $\F_P$ instead of an element of $\zo$. The above construction is well defined as long as $Q \ge P$ (all the operations are still the modular operations in $\F_Q$).

\begin{proposition} \label{Proposition:P-Sum-binding}
	The above relativistic $P$-string commitment protocol is $\eps$-sum-binding with $\eps = \frac{4P}{Q^{1/3}}$.
\end{proposition}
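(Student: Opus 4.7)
The plan is to reduce the sum-binding statement for the string commitment to the bound on $\omega^*(\CHSH^Q(P))$ already obtained in Subsection \ref{subSection:EntangledGames}, which itself rests on Proposition \ref{Proposition:Quantum_coup_bound}. So the real content (the consecutive measurement analysis) is already done; what remains is only a reduction.

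First, I fix an arbitrary cheating commit strategy $\Comm^*$ for $\mathcal{A}_1$ (including all preshared entanglement between $\mathcal{A}_1$ and $\mathcal{A}_2$) and, for each $d \in \{0,\dots,P-1\}$, let $\Open^*_d$ be the optimal opening strategy for $\mathcal{A}_2$ when she tries to reveal $d$, so that $p^*_d := \Pr[\text{successful reveal of } d \mid (\Comm^*,\Open^*_d)]$. The goal is to show $\sum_{d=0}^{P-1} p^*_d \le 1 + \tfrac{4P}{Q^{1/3}}$.

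Second, I build a strategy for the game $\CHSH^Q(P)$ out of $(\Comm^*, \{\Open^*_d\}_d)$. I identify the $\CHSH^Q(P)$-Alice with $\mathcal{A}_1$: on game input $x \in \F_Q$ she runs $\Comm^*$ and outputs $y$. I identify the $\CHSH^Q(P)$-Bob with $\mathcal{A}_2$: on game input $d \in \{0,\dots,P-1\}$ she runs $\Open^*_d$ and outputs $a$. The preshared randomness/entanglement of $(\mathcal{A}_1,\mathcal{A}_2)$ plays the role of the shared quantum state of the game's players. This preserves the no-communication constraint for two reasons: $\mathcal{A}_1$ produces $y$ during the commit phase, strictly before she learns $d$ at the start of the sustain phase; and the NSS timing constraint of the protocol ensures that $\mathcal{A}_2$ produces $a$ without any information about $x$. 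Therefore this is a legitimate entangled strategy for $\CHSH^Q(P)$.

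Third, I match the acceptance conditions. The verifier accepts iff $y = a + d * x$ in $\F_Q$. After the relabeling $(x_{\mathrm{game}}, y_{\mathrm{game}}, a_{\mathrm{game}}, b_{\mathrm{game}}) := (x, d, y, -a)$, this becomes $a_{\mathrm{game}} + b_{\mathrm{game}} = x_{\mathrm{game}} * y_{\mathrm{game}}$, which is exactly the winning predicate of $\CHSH^Q(P)$. Averaging over the uniform input $d$, the strategy's winning probability is $\frac{1}{P}\sum_{d=0}^{P-1} p^*_d$, so
\begin{equation*}
\frac{1}{P}\sum_{d=0}^{P-1} p^*_d \;\le\; \omega^*(\CHSH^Q(P)) \;\le\; \frac{1}{P} + \frac{4}{Q^{1/3}},
\end{equation*}
where the second inequality is the bound derived in Subsection \ref{subSection:EntangledGames}. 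Multiplying by $P$ yields the claimed $\eps = \tfrac{4P}{Q^{1/3}}$.

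The only place one could slip is in the reduction step: one must argue carefully that the NSS timing constraint of the relativistic protocol really does enforce the two-player non-signaling structure of $\CHSH^Q(P)$, i.e.\ that $\mathcal{A}_2$'s output is produced without any functional dependence on $x$ and that $\mathcal{A}_1$'s output is produced without any dependence on $d$. Once this is spelled out, everything else is a direct application of the already-established game-value bound, and hence ultimately of Theorem \ref{Theorem:Multi}.
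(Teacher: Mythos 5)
Your proposal is correct and follows essentially the same route as the paper's proof: fix a cheating commit strategy, package the per-$d$ optimal opening strategies into a single input-dependent measurement for the $\CHSH^Q(P)$-Bob, observe the acceptance predicate is exactly the game predicate, and then invoke the bound $\omega^*(\CHSH^Q(P)) \le \frac{1}{P} + \frac{4}{Q^{1/3}}$ from Section~\ref{subSection:EntangledGames}. Your write-up is a little more explicit than the paper about the $d$-dependence of $\Open^*_d$ and about why the NSS timing constraint yields the non-signaling structure of the game, but there is no substantive difference.
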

\begin{proof}
	Consider a $P$-string commitment $\Pi$ and a cheating strategy $Str^* = (Comm^*,Open^*)$. In this strategy, $\mathcal{A}_1$ and $\mathcal{A}_2$ share an entangled state $\ket{\psi}$. After receiving $b$, $\mathcal{A}_1$ performs a measurement on her part of the state to produce an output $y$ which she sends to $\mathcal{B}_1$. For a random $d$ that $\mathcal{A}_2$ wants to reveal, she performs a measurement on her part of the state to produce an output $a$. We have 
	$$ \frac{1}{P} \sum_{d = 0}^{P-1} \left( \Pr[\mbox{Alice successfully reveals } d \mid (\Comm^*,\Open^*)]  \right) = \Pr[a + y = b * d].$$
	
	One can directly use the above strategy to construct a strategy for a $\CHSHQ(P)$ game (defined in Section \ref{Section:EntangledGames}), with respective inputs $b \in F_Q$, $d \in F_P$ and with respective outputs $y$ and $a$. We have immediately
	$$ \Pr[a + y = b * d] \le \omega^*(\CHSHQ(P)) \le \frac{1}{P} + \frac{4}{Q^{1/3}},$$
	where the bound on the entangled is the one from Section \ref{Section:EntangledGames}. This gives us 
	$$ \sum_{d = 0}^{P-1} \max_{\Open^*} \left( \Pr[\mbox{Alice successfully reveals } d \mid (\Comm^*,\Open^*)]  \right) \leq 1 + \frac{4P}{Q^{1/3}}$$
which proves the desired proposition.
\end{proof}
If we want to perform an $\eps$-sum-binding $P$-string commitment protocol then we need to send $\log(Q) = \log(\frac{64P^3}{\eps^3}) = 3(\log(P) + |\log(\eps)|) + 8$ bits for each round of the protocol. 

\subsection{Parallel repetiton of $RBC$}
The problem with string commitment is that it is not possible to reveal only some bits of the string: by construction, one has to reveal the whole string. In order to circumvent this issue, we need to consider performing a bit commitment $n$ times in parallel. This then allows one to reveal only a fraction of the bits. The scheme will still feature sum-binding property but the scaling in parameters -- although still polynomial -- will not be as good as for string commitment.

\begin{enumerate}
	\item \emph{Preparation phase}: $\mathcal{A}_1,\mathcal{A}_2$ (resp.~$\mathcal{B}_1,\mathcal{B}_2$) share $n$ random bits $a_1,\dots,a_n \in \FQ$ (resp.~$b_1,\dots,b_n \in \FQ$).
	\item \emph{Commit phase}: $\mathcal{B}_1$ sends each $b_i$ to $\mathcal{A}_1$, who returns for each $i$ $y_i = a_i + d_i * b_i$ where $d_1,\dots,d_n \in \zo$ is the sequence of committed bits. 
	\item \emph{Sustain phase}: $\mathcal{A}_1$ and $\mathcal{A}_2$ wait for some time $\tau \leq D/c$.
	\item \emph{Reveal phase}: Let $S$ be the subset of indices Alice wants to reveal. $\mathcal{A}_2$ indicates $S$ to $\mathcal{B}_2$ and reveals the values $\{a_i\}_{i \in S}$ and $\{d_i\}_{i \in S}$ to $\mathcal{B}_2$ who checks that for each $i \in S$, the relation $y_i = a_i + d_i*b_i$ holds.
\end{enumerate}

\begin{proposition} \label{Proposition:Parallel_Sum-binding}
	Fix a subset $S$ of indices Alice will reveal to. Relative to $S$, the above protocol  is $\eps$-sum binding with $\eps = 4(\frac{2|S|2^{2|S|}}{Q})^{1/3} \le 4(\frac{2n 2^{2n}}{Q})^{1/3}$. 
\end{proposition}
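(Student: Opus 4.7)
The plan is to mirror the proof of Proposition~\ref{Proposition:P-Sum-binding} by reducing the sum-binding, relative to $S$, to the entangled-value bound for $\CHSH^Q(2)^{\otimes |S|}$ established in Section~\ref{subSection:EntangledGames}. The right ``string'' to commit to is no longer all $n$ bits, but only their restriction $d_S := (d_i)_{i \in S}$, because the reveal condition only involves the coordinates in $S$.

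First, I would fix an arbitrary cheating strategy $Str^* = (Comm^*,Open^*)$: $\mathcal{A}_1$ and $\mathcal{A}_2$ share an entangled state, $\mathcal{A}_1$ has a measurement (indexed by $b \in \FQ^n$) returning $y \in \FQ^n$, and for each target $d_S \in \zo^{|S|}$, $\mathcal{A}_2$ has a measurement returning $a_S = (a_i)_{i \in S}$. The reveal on $S$ succeeds iff $y_i = a_i + d_i \ast b_i$ for every $i \in S$. I would then convert this into a strategy for $\CHSH^Q(2)^{\otimes |S|}$ in which $\mathcal{A}_1$ plays Alice (input $b_S$, output $y_S$) and $\mathcal{A}_2$ plays Bob (input $d_S$, output $a_S$): on receiving $b_S$, $\mathcal{A}_1$ samples the remaining $(b_j)_{j \notin S}$ uniformly in $\FQ$ by herself and runs her original cheating measurement on the full vector, outputting only $y_S$; $\mathcal{A}_2$ runs her reveal measurement for $d_S$ directly. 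Since the locally generated $(b_j)_{j \notin S}$ have the honest distribution and those coordinates do not enter the reveal condition, the winning probability of this strategy equals the average reveal probability over a uniform $d_S$.

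Finally, taking the best $Open^*$ for each $d_S$ corresponds, in the $\CHSH$ game, to optimizing Bob's measurement separately on each input $d_S$, which is part of the standard CHSH strategy space; hence
\[
\sum_{d_S \in \zo^{|S|}} \max_{Open^*} \Pr[\text{Alice successfully reveals } d_S \mid Str^*] \le 2^{|S|}\,\omega^*\bigl(\CHSH^Q(2)^{\otimes |S|}\bigr).
\]
Plugging in $\omega^*(\CHSH^Q(2)^{\otimes |S|}) \le \frac{1}{2^{|S|}} + 4\bigl(\frac{2|S|}{Q\,2^{|S|}}\bigr)^{1/3}$ from Section~\ref{subSection:EntangledGames} and multiplying through yields the advertised $\eps = 4\bigl(\frac{2|S|\,2^{2|S|}}{Q}\bigr)^{1/3}$. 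There is no serious obstacle here: the only point to double-check is that $(b_j)_{j\notin S}$ can be sampled locally without changing anything, which is immediate from the uniform distribution of $b$ in the honest protocol and from the fact that only indices in $S$ appear in the reveal condition. All the technical work has already been absorbed into the parallel-$\CHSH^Q$ bound of Section~\ref{subSection:EntangledGames}, itself a direct application of Proposition~\ref{Proposition:Quantum_coup_bound}.
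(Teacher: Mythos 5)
Your proof is correct and takes the same route as the paper: reduce the sum-binding relative to $S$ to $2^{|S|}\,\omega^*(\CHSH^Q(2)^{\otimes|S|})$ and plug in the bound from Section~\ref{subSection:EntangledGames}. The paper states the reduction as a one-liner ("as before"); you spell out the reduction (Alice plays on $b_S$, samples $(b_j)_{j\notin S}$ locally, the per-$d_S$ maximization is absorbed into Bob's strategy space), which is exactly the right way to fill that gap.
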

\begin{proof}
	Fix a subset $S$. As before, we can use a strategy for the relativistic bit commitment to solve an instance of $\CHSHQ(2)^{\otimes |S|}$ which implies 
	\begin{align*} \sum_{d \in \zo^{|S|}} \max_{\Open^*} \left( \Pr[\mbox{Alice successfully reveals } d \mid (\Comm^*,\Open^*)]  \right) \\
	\le 2^{|S|} \omega^*(\CHSHQ(2)^{\otimes |S|}).
	\end{align*}
	
	Since we know that $\omega^*(\CHSHQ(2)^{\otimes |S|}) \le \frac{1}{2^{|S|}} + 4(\frac{2|S|}{Q2^{|S|}})^{1/3}$, we can immediately conclude that 
	
	\begin{align*}
	 \sum_{d \in \zo^{|S|}} \max_{\Open^*} \left( \Pr[\mbox{Alice successfully reveals } d \mid (\Comm^*,\Open^*)]  \right) \le 2^{|S|}  \\
	 \le 1 + 4(\frac{2|S|2^{2|S|}}{Q})^{1/3}.\end{align*}
\end{proof}
If we want the above protocol to be $\eps$-sum-binding, we need to send $n \log(Q) = O(n^2 \log(n) + n|\log(\eps)|)$ bits at each round.

\section{Relativistic zero-knowledge}\label{Section:RelativisticZK}
In this section, we present our relativistic zero-knowledge protocol for $\mathsf{NP}$. Our protocol will be based on the well known protocol for the $\mathsf{NP}$-complete problem \HamiltonianCycle, which uses bit commitment.
\subsection{The zero-knowledge Hamiltonian cycle protocol}
Here, we present the zero-knowledge Hamiltonian cycle protocol and its adaptation to the relativistic setting. Let $S_n$ the set of permutation on $\{1,\dots,n\}$.
\begin{definition}
	A cycle of $\{1,\dots,n\}$ is a set of couples 
	$$\{(\Pi(1),\Pi(2)),(\Pi(2),\Pi(3)),\dots,(\Pi(n-1),\Pi(n)),(\Pi(n),\Pi(1))\}$$
	 for a permutation $\Pi \in S_n$. We denote by $\Gamma_n$ the set of cycles of $\{1,\dots,n\}$. We have $|\Gamma_n| = (n-1)!$. For a cycle $\mathcal{C} = \{(u,v)\}$ and a permutation $\Pi$, we also define $\Pi(\mathcal{C}) := \{(\Pi(u),\Pi(v)\}$
\end{definition}
\begin{definition}
	A Hamiltonian cycle of a graph $G = (V,E)$ is a cycle $\mathcal{C}$ of $\{1,\dots,|V|\}$ such that $\mathcal{C} \in E$ \ie $ \ \forall (i,j) \in \mathcal{C}, \ (i,j) \in E$.
\end{definition}
	Determining whether a graph $G$ has a Hamiltonian cycle or not is an NP-complete problem. The corresponding decision problem is \HamiltonianCycle and $G \in \HamiltonianCycle$ means that the graph contains a Hamiltonian cycle.
\subsection{The protocol}
We recall the zero-knowledge protocol for \HamiltonianCycle\ first presented by Blum \cite{Blu86}.

\fbox{
\begin{minipage}{0.93\textwidth}
\begin{center}	Zero knowledge protocol for \HamiltonianCycle  \end{center} 

\noindent \textbf{Input} --- The prover and the verifier are given a graph $G = (V,E)$. \\
\noindent \textbf{Auxiliary Input} --- The prover knows a Hamiltonian cycle $\mathcal{C}$ of $G.$\\
\noindent \textbf{Protocol} --- 
\begin{enumerate}
	\item The prover picks a random permutation $\Pi : V \rightarrow V$. He commits to each of bit of the adjacency matrix $M_{\Pi(G)}$ of $\Pi(G)$.
	\item The verifier sends a random bit (called the challenge) $chall \in \zo$ to the prover.
	\item
	\begin{itemize}
	\item If $chall = 0$, the prover decommits to all the elements of $M_{\Pi(G)}$, and reveals $\Pi$. 
	\item If $chall =1$, he reveals only the bits (of value $1$) of the adjacency matrix that correspond to a Hamiltonian cycle $\mathcal{C}' = \Pi(\mathcal{C})$ of $\Pi(G)$.
	\end{itemize}
	\item The verifier checks that these decommitments are valid and correspond, for $chall = 0$ to $M_{\Pi(G)}$ and, for $chall=1$, to a Hamiltonian cycle.
\end{enumerate}
\end{minipage} 
}

%
%\cadre{ 
%\begin{center}	{
%		Zero knowledge protocol for \HamiltonianCycle } \end{center} \\
%\noindent \textbf{Input} --- The prover and the verifier are given a graph $G = (V,E)$. \\
%\noindent \textbf{Auxiliary Input} --- The prover knows a Hamiltonian cycle $\mathcal{C}$ of $G.$\\
%\noindent \textbf{Protocol} --- 
%\begin{enumerate}
%	\item The prover picks a random permutation $\Pi : V \rightarrow V$. He commits to each of bit of the adjacency matrix $M_{\Pi(G)}$ of $\Pi(G)$.
%	\item The verifier sends a random bit (called the challenge) $chall \in \zo$ to the prover.
%	\item
%	\begin{itemize}
%	\item If $chall = 0$, the prover decommits to all the elements of $M_{\Pi(G)}$, and reveals $\Pi$. 
%	\item If $chall =1$, he reveals only the bits (of value $1$) of the adjacency matrix that correspond to a Hamiltonian cycle $\mathcal{C}' = \Pi(\mathcal{C})$ of $\Pi(G)$.
%	\end{itemize}
%	\item The verifier checks that these decommitments are valid and correspond, for $chall = 0$ to $M_{\Pi(G)}$ and, for $chall=1$, to a Hamiltonian cycle.
%\end{enumerate}
%} $ \ $ \\ \\

We now present the relativistic zero-knowledge protocol, that uses the $\F_Q$ bit commitment.

\fbox{
\begin{minipage}{0.93\textwidth}
\begin{center}	Relativistic zero knowledge protocol for \HamiltonianCycle\  \end{center} 

\noindent \textbf{Input} --- The provers and the verifiers are given a graph $G = (V,E)$. \\
\noindent \textbf{Auxiliary Input} --- The provers $P_1$ and $P_2$ know a Hamiltonian cycle $\mathcal{C}$ of $G$. \\
\noindent \textbf{Preprocessing} --- $P_1$ and $P_2$ agree beforehand on a random permutation $\Pi : V \rightarrow V$ and on an $n \times n$ matrix $A \in \mathcal{M}_n^{\F_Q}$ where each element of $A$ is chosen uniformly at random in $\F_Q$. \\
\noindent \textbf{Protocol} --- 
\begin{enumerate}
	\item Commitment to each bit of $M_{\Pi(G)}$ : $V_1$ sends a matrix $B \in \mathcal{M}_n^{\F_Q}$ where each element of $B$ is chosen uniformly at random in $\F_Q$. $P_1$ outputs the matrix $Y \in \mathcal{M}_n^{\F_Q}$ such that $\forall i,j \in [n], \ Y_{i,j} = A_{i,j} + (B_{i,j} * (M_{\Pi(G)})_{i,j})$.
	\item The verifier $V_2$ sends a random bit (called the challenge) $chall \in \zo$ to the prover $P_2$.
	\item
	\begin{itemize}
		\item If $chall = 0$, $P_2$ decommits to all the elements of $M_{\Pi(G)}$, \textit{i.e.} he sends all the elements of $A$ to $V_2$  and reveals $\Pi$. 
		\item If $chall =1$, $P_2$ reveals only the bits (of value $1$) of the adjacency matrix that correspond to a Hamiltonian cycle $\mathcal{C}'$ of $\Pi(G)$, \textit{i.e.} for all edges $(u,v)$ of $\mathcal{C}'$, he sends $A_{u,v}$ as well as $\mathcal{C}'$.
	\end{itemize}
	\item The verifier checks that those decommitments are valid and correspond to what the provers have declared. He also checks that the timing constraint of the bit commitment is satisfied. This means that
	\begin{itemize}
		\item if $chall = 0$, the prover's opening $A$ must satisfy $\forall i,j \in [n], \ Y_{i,j} = A_{i,j} + (B_{i,j} * (M_{\Pi(G)})_{i,j})$.
		\item if $chall = 1$, the prover's opening $A$ must satisfy $\forall (u,v) \in \mathcal{C}', \ Y_{u,v} = A_{u,v} + B_{u,v}$.
	\end{itemize}
\end{enumerate} 
\end{minipage} 
} 

\subsection{Proof of security}
Our goal is to show that the above protocol is a relativistic zero-knowledge protocol for \HamiltonianCycle. In order to do this, we show the following
\begin{itemize}
	\item Completeness: If the prover and the verifier are honest then for any graph $G$ that has a Hamiltonian cycle, the verifier accepts with certainty.
	\item Soundness: If we take $Q = 64n!2^{3k}$, we have that for any cheating prover, $\forall G \notin \HamiltonianCycle$, the verifier accepts with probability at most $\frac{1}{2} + 2^{-k}$. With this parameter $Q$, the amount of bits sent during the protocol is $\log(Q)$ for each committed bit and is therefore $n^2 \log(Q) = O(kn^3\log(n))$ at each round.
	\item Perfect zero-knowledge: for any cheating verifier $V^*$, there exists a quantum poly-time simulator $\Sigma$ that can reproduce the cheating verifier's view of the protocol for any input $G \in \HamiltonianCycle$ and any auxiliary input $\rho$. More details about this zero-knowledge property can be found in the corresponding subsection. 
\end{itemize}
\subsubsection{Completeness.}
If both players are honest and $G$ contains a Hamiltonian cycle then the protocol always succeeds. Indeed, the original protocol from Blum has perfect completeness. Moreover, the $\FQ$ bit commitment always succeeds when done honestly.
\subsubsection{Soundness.}
The soundness can be reduced to the following 2-player game $G^{RZK-HAM}$.
\begin{itemize}
	\item $P_1$ receives a matrix $B \in \mathcal{M}_n^{\F_Q}$ where each element of $B$ is chosen uniformly at random in $\F_Q$. $P_2$ receives a random input bit $chall$.
	\item $P_1$ outputs a matrix $Y \in \mathcal{M}_n^{\F_Q}$. If $chall = 0$ then $P_2$ outputs a permutation $\Pi$ and a matrix $A \in \mathcal{M}_n^{\F_Q}$. If $chall = 1$ then $P_2$ outputs a cycle $\mathcal{C}'$ and $n$ strings $\{A'_{(u,v)}\}_{(u,v) \in \mathcal{C}'}$ in $\FQ$.
	\item If $chall = 0$, the two players win if $\forall i,j \in [n], \ Y_{i,j} = A_{i,j} + (B_{i,j} * (M_{\Pi(G)})_{i,j})$. If $chall = 1$, the two players win if for all edges $(u,v)$ of $\mathcal{C}'$, $Y_{u,v} = A_{u,v} + B_{u,v}$, which corresponds to revealing $1$ for each edge of the cycle $\mathcal{C}'$.
\end{itemize}
This game is $n!$-projective: once the permutation (or the cycle) is chosen, the winning output is fixed. In order to study this game, we study the game $G^{RZK-HAM}_{coup}$. We fix an input/output pair $(B,Y)$ for $P_1$ and we consider winning outputs for $P_2$ for both inputs. For $chall = 0$, we have a permutation $\Pi$ and a matrix $A \in \mathcal{M}_n^{\F_Q}$ which is a valid opening of $M_{\Pi(G)}$ meaning that 
\begin{align}\label{Eq:chall=0}
\forall (i,j), \ A_{i,j} = Y_{i,j} - B_{i,j} * (M_{\Pi(G)})_{i,j}.
\end{align}
For $chall = 1$, we have a cycle $\mathcal{C}'$ of $\{1,\dots,|V|\}$ as well as openings $A'_{u,v}$ for each $(u,v) \in \mathcal{C}'$. Because it is a winning output, the openings must satisfy
\begin{align} \label{Eq:chall=1}
\forall (u,v) \in \mathcal{C}', \ A'_{u,v} = Y_{u,v} - B_{u,v}.
\end{align}
If the graph $G$ (hence also $\Pi(G)$) does not contain a Hamiltonian cycle then there has to be an edge $(u,v)$ of $\mathcal{C}'$ such that $\left(M_{\Pi(G)}\right)_{u,v} = 0$. For this specific $(u,v)$, we combine Equations \ref{Eq:chall=0} and \ref{Eq:chall=1} and get:
$$A_{u,v} = Y_{u,v} \quad ; \quad A'_{u,v} = Y_{u,v} - B_{u,v}. $$
This implies that $A_{u,v} - A'_{u,v}  = B_{u,v}$ which happens with probability at most $\frac{1}{Q}$ from non-signaling. We therefore conclude that $\omega^*(G^{RZK - HAM}_{coup}) \le \frac{1}{Q}$. From there, we can apply Proposition \ref{Proposition:Quantum_coup_bound} and obtain 
$$ \omega^*(G^{RZK - HAM}) \le \frac{1}{2} + \left(\frac{64 n!}{Q}\right)^{1/3}. $$
If we take $Q = 64n!2^{3k}$ then the protocol has soundness $\frac{1}{2} + 2^{-k}$. The amount of bits sent during the protocol is $\log(Q)$ for each committed bit and is therefore $n^2 \log(Q) = O(kn^3\log(n))$, which shows that the protocol is efficient.

\subsection{Zero-knowledge property.}
In this section, we show that the above protocol is zero-knowledge. One of the main difficulties in proving zero-knowledge in the quantum setting arises when requiring the simulator to perform rewinding while preserving an auxiliary state. Here, there is no need for rewinding and the simulation can be done perfectly and quite simply. The simulator will simply simulate each round of the protocol from the first one to the last one. The reason of this simplicity is that in our bit commitment scheme, the verifier and the simulator are able, for any commitment, to reveal an arbitrary value of their choice. This is a rare feature because the prover shouldn't be able to do this to preserve the binding property. In our case, this asymmetry comes from the relativistic constraints imposed on the provers.

\paragraph{Zero-knowledge in the relativistic setting.}

From the provers' point of view, each of them receives a message and replies. We assume that a cheating verifier can totally bypass the timing constraints. We therefore consider one cheating verifier that interacts with both provers. Moreover, we allow the verifier to send a query to the second prover after receiving the answer from the first prover or vice-versa. All of this is meant to have a cheating verifier as strong as possible. Proving the zero-knowledge property in this setting will therefore be stronger in this model. Also, this will show the zero-knowledge property both for relativistic zero-knowledge and for the (very related) $2$-prover $1$-round multi-prover interactive proof model.

A cheating verifier $V^*$ is modeled by a polynomial-time uniform family of pairs of circuits $\{(V^*_1(n),V_2^*(n)\} $ where each $V_i^*(n)$ represents the verifier action towards prover $P_i$ on input size $n$. The verifier sends a query to each prover in respective classical registers $Q_1$ and $Q_2$ and gets responses in respective classical registers $R_1$ and $R_2$. The verifier also has access to private quantum register $\mathcal{V}$, which initially contains a quantum auxiliary state $\rho$.

Fix a cheating verifier $V^*$. For any message $B \in \mathcal{M}_n^{\FQ}$ sent from the verifier to $P_1$, the message from $P_1$ is a uniformly random matrix $Y = \mathcal{M}_n^{\FQ}$ while the message from $P_2$ consists of:
\begin{itemize}
	\item if $chall = 0$, a random permutation $\Pi$ and a matrix $A$ satisfying $Y = A + B * M_{\Pi(G)}$ where the multiplication is the entry-wise matrix multiplication.
	\item if $chall = 1$, a random cycle $\mathcal{C}'$ and a family of strings $\{A'_{u,v}\}_{(u,v) \in \mathcal{C}'}$ satisfying 
	$$ \forall (u,v) \in \mathcal{C}', \ Y_{u,v} = A'_{u,v} + B_{u,v}.$$
\end{itemize}
The verifier receives as a first message a random matrix $Y$ and as second message a random permutation (for chall = 0) or a cycle (for chall = 1) with a uniquely determined message $A$ or $A'$ that he can perfectly infer from the information available to him. Notice that in the soundness analysis, the prover doesn't know what message he has to send because of relativistic constraints which do not apply for the verifier (as we said, this only increases our claim on zero-knowledge). 

All of the above remains true for any strategy for the cheating verifier and with any auxiliary input, and even if the verifier queries a prover depending on the answer of the other prover. Moreover, simulating the interaction between $V^*$ and the provers can be done step by step following $V^*$'s actions, without any need for rewinding. This therefore shows the perfect zero-knowledge property of our scheme. In order to illustrate this, we present below a step by step simulation of the verifier's view in a more formal way than what we did above.

\paragraph{Step by step simulation of the verifier's view of the protocol.}

For a cheating verifier $V^*$, we construct a quantum poly-time simulator such that on any input $G \in \HamiltonianCycle$ and auxiliary input $\rho$, the simulator can recreate the verifier's view of the protocol perfectly. The simulator will use $V^*$ as a black box and will mimic the verifier's view of the protocol after each round. When considering the interaction between the verifier and the provers, we will always distinguish $2$ cases 
 \begin{enumerate}
 	\item The action of $V^*_2$ depends on the interaction with $P_1$.
 	\item The action of $V^*_1$ depends on the interaction with $P_2$.
 \end{enumerate}
 Note that both of these events cannot happen simultaneously. In the analysis below, we will consider case $1$ but the other one can be treated in the exact same way.

We first describe the different view for a cheating verifier $V^*$ and then show how to perform the simulation. Let $\sigma_i$ be the verifier's view at step $i$ of the protocol. 

\begin{itemize}
	\item At the beginning of the protocol, the verifier's view consists of $\sigma_0 := \rho_{\mathcal{V}}$.
	\item After the verifier's first message to $P_1$, the verifier's view is 
	$$\sigma_1 := V_1^*(\rho) = \sum_{B \in \mathcal{M}_n^{\F_Q}} p_{B} \altketbra{B}_{Q_1} \otimes \rho(B)_{\mathcal{V}}.$$
	\item After the first prover's answer, the shared state between the provers and the verifier is 
	$$\sigma_2 := \frac{1}{n!} \frac{1}{Q^{n^2}} \sum_{\Pi \in S_n}\sum_{A \in \mathcal{M}_n^{\F_Q}} \sum_{B \in \mathcal{M}_n^{\F_Q}} p_{B} \ \altketbra{Y(\Pi,A)}_{R_1} \otimes  \altketbra{B}_{Q_1} \otimes \rho(B)_{\mathcal{V}}.$$
	where $Y(\Pi,A) := A + B*\Pi(G)$ with $*$ being the entry wise matrix multiplication.
	\item Now, the verifier sends his challenge bit, which can depend on everything that happened before. His view becomes
	\begin{align*}
	 \sigma_3 := \frac{1}{n!} \frac{1}{Q^{n^2}} \sum_{\Pi \in S_n}\sum_{A \in \mathcal{M}_n^{\F_Q}}\sum_{B \in \mathcal{M}_n^{\F_Q}}\sum_{c \in \zo}   p_{B,c} \ \altketbra{Y(\Pi,A)}_{R_1} \otimes \altketbra{c}_{Q_2} \\
	  \otimes \altketbra{B}_{Q_1} \otimes \rho(B,c,Y(\Pi,A))_{\mathcal{V}}.
	 \end{align*}
	\item After the final message from the prover, the verifier's view becomes
	\begin{align*} \sigma_4 := \frac{1}{n!} \frac{1}{Q^{n^2}} \sum_{\Pi \in S_n}\sum_{A \in \mathcal{M}_n^{\F_Q}}\sum_{B \in \mathcal{M}_n^{\F_Q}} \altketbra{Y(\Pi,A)}_{R_1}  \otimes & \altketbra{B}_{Q_1}  \otimes \\ \Big(p_{B,0} \altketbra{0}_{Q_2} \otimes \altketbra{\Pi,A}_{R_2}  & \otimes \rho(B,0,Y(\Pi,A))  \\ +  p_{B,1} \altketbra{1}_{Q_2}  \otimes \altketbra{\Pi(\mathcal{C}),A_{\Pi(\mathcal{C})}}_{R_2} & \otimes \rho(B,1,Y(\Pi,A)) \Big). \end{align*}
\end{itemize}

Notice that we are interested here in the verifier's view on a 'Yes' instance, meaning that on challenge $'1'$ in register $Q_2$, the answer $\altketbra{\Pi(\mathcal{C}),A_{\Pi(\mathcal{C})}}$ satisfies
$$ \forall (i,j) \in \Pi(\mathcal{C}), \ Y_{i,j} = A_{i,j} + B_{i,j}.$$
meaning that the prover revealed the output bit '1' for entry $\Pi(G)_{i,j}$. Notice also that for a fixed cycle $\mathcal{C}$, the mapping $\Pi \rightarrow \Pi(\mathcal{C})$ is a bijection between the set of permutation and the set of cycles.  

We show now how to simulate the view of the verifier. The simulator can easily simulate $\sigma_0$ and $\sigma_1$ since he has a copy of $\rho$ and knows $V^*_1$. Notice that in $\sigma_2$, the message from the prover is a uniform random matrix because of the randomness $A$. Therefore, we have 
$$\sigma_2 = \frac{1}{Q^{n^2}} \sum_{Y \in \mathcal{M}_n^{\F_Q}} \sum_{B \in \mathcal{M}_n^{\F_Q}} p_{B} \ \altketbra{Y}_{R_1} \otimes  \altketbra{B}_{Q_1} \otimes \rho(B)_{\mathcal{V}}.$$
This can be easily created by the simulator by just tensoring the totally mixed state in register $R_1$ to $\sigma_1$. In order to construct $\sigma_3$, the simulator just applies $V^*_2$ to transform $\sigma_2$ into $\sigma_3$ as the cheating verifier would and gets exactly
$$ \sigma_3 =  \frac{1}{Q^{n^2}} \sum_{Y \in \mathcal{M}_n^{\F_Q}}\sum_{B \in \mathcal{M}_n^{\F_Q}}\sum_{c \in \zo}   p_{B,c} \ \altketbra{Y}_{R_1} \otimes \altketbra{c}_{Q_2} \otimes \altketbra{B}_{Q_1} \otimes \rho(B,c,Y)_{\mathcal{V}}.$$
Finally, in order to construct $\sigma_4$, the simulator does the following
\begin{itemize}
	\item conditioned on $c = 0$ in register $Q_2$, the simulator picks a random permutation $\Pi$ and puts $\kb{\Pi,A(\Pi,B,Y)}$ in register $R_2$ where $A(\Pi,B,Y) := Y - B * \Pi(G)$, with $*$ being the entry wise matrix multiplication.
	\item conditioned on $c=1$ in register $Q_2$, the simulator picks a random cycle $\mathcal{C}'$ and outputs $\kb{\mathcal{C}',A'(\mathcal{C}',B,Y)}$ such that for all $(i,j) \in \mathcal{C}'$, it holds that $ A'(\mathcal{C}',B,Y)_{i,j} := Y_{i,j} - B_{i,j}$.
\end{itemize}
The state constructed by the simulator is therefore
\begin{align*}
\frac{1}{Q^{n^2}}\sum_{A \in \mathcal{M}_n^{\F_Q}}\sum_{B \in \mathcal{M}_n^{\F_Q}} \altketbra{Y}_{R_1}  \otimes \altketbra{B}_{Q_1} \otimes \\
(p_{B,0} \kb{0}_{Q_2} \otimes \frac{1}{n!} \sum_{\Pi \in S_n} \kb{\Pi,A(\Pi,B,Y)}_{R_2} \otimes \rho_{B,0,Y} \\
+ p_{b,1} \kb{1}_{Q_2} \otimes \frac{1}{(n-1)!} \sum_{\mathcal{C}' \in \Gamma_n} \kb{\mathcal{C}',A'(\mathcal{C}',B,Y)}_{R_2} \otimes \rho_{B,1,Y}).
\end{align*}
By simple changes of variables, we can see that the above state is actually exactly equal to $\sigma_4$. Therefore, we succeeded in the simulation and we can conclude that our protocol is perfectly zero-knowledge against quantum adversaries.
\section*{Acknowledgements}
The authors were partially supported by ANR DEREC $<$ANR-16-CE39-0001-01$>$.
\bibliography{paper}
\bibliographystyle{alpha}
\end{document}